\newtheorem{theorem}{Theorem}[section]
\newtheorem{proposition}[theorem]{Proposition}
\newtheorem{lemma}[theorem]{Lemma}
\newtheorem{definition}[theorem]{Definition}
\newtheorem{assumption}[theorem]{Assumption}
\newtheorem{remark}[theorem]{Remark}
\title{Shapley Machine: A Game-Theoretic Framework for N-Agent Ad Hoc Teamwork}
\author{%
  Jianhong Wang\thanks{Jianhong Wang is a visiting researcher at the Centre for AI Fundamentals at University of Manchester. $^\dagger$ Correspondence to Jianhong Wang and Yang Li.}\ \ $^{\dagger}$ \\
  INFORMED-AI Hub \\
  University of Bristol, UK \\
  \texttt{jianhong.wang@bristol.ac.uk} \\
  \And
  Yang Li$^\dagger$ \\
  Centre for AI Fundamentals \\
  University of Manchester, UK \\
  \texttt{yang.li-4@manchester.ac.uk} \\
  \And
  Samuel Kaski \\
  Centre for AI Fundamentals \\
  University of Manchester, UK \\
  \texttt{samuel.kaski@manchester.ac.uk} \\
  \And
  Jonathan Lawry \\
  School of Engineering Mathematics and Technology \\
  University of Bristol, UK \\
  \texttt{J.Lawry@bristol.ac.uk}
  % examples of more authors
  % \And
  % Coauthor \\
  % Affiliation \\
  % Address \\
  % \texttt{email} \\
  % \AND
  % Coauthor \\
  % Affiliation \\
  % Address \\
  % \texttt{email} \\
  % \And
  % Coauthor \\
  % Affiliation \\
  % Address \\
  % \texttt{email} \\
  % \And
  % Coauthor \\
  % Affiliation \\
  % Address \\
  % \texttt{email} \\
}
\begin{document}

\maketitle

\begin{abstract}
  Open multi-agent systems are increasingly important in modeling real-world applications, such as smart grids, swarm robotics, etc. In this paper, we aim to investigate a recently proposed problem for open multi-agent systems, referred to as n-agent ad hoc teamwork (NAHT), where only a number of agents are controlled. Existing methods tend to be based on heuristic design and consequently lack theoretical rigor and ambiguous credit assignment among agents. To address these limitations, we model and solve NAHT through the lens of cooperative game theory. More specifically, we first model an open multi-agent system, characterized by its value, as an instance situated in a space of cooperative games, generated by a set of basis games. We then extend this space, along with the state space, to accommodate dynamic scenarios, thereby characterizing NAHT. Exploiting the justifiable assumption that basis game values correspond to a sequence of n-step returns with different horizons, we represent the state values for NAHT in a form similar to $\lambda$-returns. Furthermore, we derive Shapley values to allocate state values to the controlled agents, as credits for their contributions to the ad hoc team. Different from the conventional approach to shaping Shapley values in an explicit form, we shape Shapley values by fulfilling the three axioms uniquely describing them, well defined on the extended game space describing NAHT. To estimate Shapley values in dynamic scenarios, we propose a TD($\lambda$)-like algorithm. The resulting reinforcement learning (RL) algorithm is referred to as Shapley Machine. To our best knowledge, this is the first time that the concepts from cooperative game theory are directly related to RL concepts. In experiments, we demonstrate the effectiveness of Shapley Machine and verify reasonableness of our theory.
\end{abstract}

\section{Introduction}
\label{sec:introduction}
    Multi-agent systems (MAS) have emerged as a prominent research area, effectively modeling and addressing numerous practical applications involving collaborative tasks, such as distributed electric power network control~\cite{wang2021multi}, railway network distribution management~\cite{zhang2024improving}, and swarm robotics~\cite{nayak2023scalable,li2024holadrone}. Reinforcement learning (RL)~\cite{sutton1998reinforcement}, particularly multi-agent reinforcement learning (MARL), has shown great promise in solving MAS problems. However, MARL typically struggles with real-world conditions such as openness, where the number of uncontrolled agents can vary, and generalization, involving unknown teammate behaviors. To address these limitations, a new problem termed n-agent ad hoc teamwork (NAHT) was recently introduced by \cite{wang2025n}, extending MARL to scenarios with varying and potentially unknown teammates. In this paper, we aim to establish a theoretical basis for modeling open multi-agent systems and propose a novel algorithm for NAHT .
    % , inspired by cooperative game theory and the Shapley value~\cite{chalkiadakis2011computational,shapley1953value}
    
    Given the novelty of this problem, only an initial practical solution called POAM~\cite{wang2025n} has been proposed. POAM enhances the IPPO algorithm~\cite{de2020independent} by incorporating an embedding vector, which serves as a continuous representation derived from each agent's historical observations using an encoder-decoder model. This embedding aims to predict the potential behaviors of other agents from each agent’s unique perspective. Despite POAM demonstrating superior performance compared to existing MARL and ad hoc teamwork algorithms, it has several notable limitations: (1) The algorithm is designed heuristically without rigorous theoretical foundations, undermining the trustworthiness crucial for ensuring reliability and safety in multi-agent systems~\cite{hammond2025multi}. (2) While controlled agents within a team receive different evaluation values during training to guide their decentralized policies, it remains unclear whether these values accurately reflect individual agents' contributions to team performance. As with other challenges in MARL, this uncertainty can adversely affect learning dynamics and efficiency~\cite{panait2005cooperative}, which is central to online learning for safety-critical systems. (3) The value functions (critics) employed in POAM are trained using TD($\lambda$)~\cite{sutton1988learning}. Beyond the RL training benefits, the broader relevance of TD($\lambda$) within open multi-agent systems remains an open question.
    
    Shapley value is a profound payoff distribution scheme in cooperative game theory, which has been broadly applied in MARL~\citep{wang2020shapley,li2021shapley,han2022multiagent,wang2022shaq,pmlr-v202-li23au}. Almost all the previous work concentrated on using the explicit form of Shapley values to shape rewards or value functions. On the other hand, Shapley values have been merely used to address cooperative MARL with deterministic number of agents, due to the lack of a corresponding theory on openness. To bridge this gap, this paper will establish a theory underlying Shapley values for open multi-agent systems, and thus derive an algorithm that solves NAHT, depending on the axiomatic characterization of Shapley values. More importantly, this algorithms is shown to have close relationship to the TD($\lambda$) as established in RL literature.

    The main contributions of this paper are as follows: (1) A theoretical model, through the lens of cooperative game theory, is established to describe open multi-agent systems (e.g. the NAHT processes) as instances situated in a vector space of cooperative games.
    (2) We impose a justifiable assumption that maps the basis of the vector space to a sequence of n-step returns with variant horizons. Based on the above theoretical grounding, Shapley value's axiomatic characterization (i.e. Efficiency, Symmetry and Additivity), rigorously defined in the vector space of cooperative games~\cite{dubey1975uniqueness}, is imposed on the value function represented as the form of $\lambda$-return~\cite{sutton1998reinforcement}[Chap. 12], and thus a TD($\lambda$)-like algorithm, referred to as Shapley Machine, is derived. To our knowledge, it is the first time that concepts from cooperative game theory (e.g. Shapley value) are closely related to a reinforcement learning technique (e.g. TD($\lambda$)). (3) The existing algorithm to address NAHT, referred to as POAM, can also be categorized according to the axioms above. 
    Therefore, it sheds light on the potential of designing algorithms for multi-agent systems, following a well-defined theoretical model and axiomatic characterization. This complies with the spirit of algorithm design from the first principle, popularized in the community of geometric deep learning~\cite{bronstein2021geometric}.

    In experiments, we have evaluated the Shapley Machine via the testbeds of NAHT on MPE and SMAC, provided by \cite{wang2025n}. We empirically demonstrate that Shapley Machine can generally outperform the only existing algorithm designed for NAHT, POAM. Additionally, we empirically verify the theory, through a spectrum of case studies. The complete mathematical proofs are left to \textcolor{black}{Appendix~\ref{sec:complete-math-proofs}}. The code is published on \url{https://github.com/hsvgbkhgbv/shapley-machine-naht}.
    % Due to the page limit, the related work is left to \textcolor{black}{Appendix~\ref{sec:related-work}} and
        
\section{Background}
\label{sec:background}
    \subsection{N-Agent Ad Hoc Teamwork}
    \label{subsec:n-agent-aht}
        In this section, we describe a problem setting for open multi-agent systems, referred to as n-agent ad hoc teamwork (NAHT)~\cite{wang2025n}. The main challenges of NAHT are as follows: (1) coordination with potentially unknown types of teammates (generalization), and (2) coping with a varying number of uncontrolled teammates (openness). A decentralized partially observable Markov decision process (Dec-POMDP)~\cite{oliehoek2016concise} is considered to formalize the problem. There is a set of agents $\tilde{N}$, a state space $\mathcal{S}$, a joint action space $\mathcal{A}=\times_{i \in \tilde{N}} \mathcal{A}_{i}$, a per-agent observation space $\mathcal{O}_{i}$, a transition function $P_{T}: \mathcal{S} \times \mathcal{A} \rightarrow \Delta(\mathcal{S})$, a common reward function $R: \mathcal{S} \times \mathcal{A} \rightarrow \mathbb{R}$, a discount factor $\gamma \in [0, 1]$ and an episode length $T \in \mathbb{Z}^{+}$ indicating the maximum length of an interaction. Each agent receives observations via an observation function $O_{i}: \mathcal{S} \times \mathcal{A} \rightarrow \Delta(\mathcal{O}_{i})$. $\mathcal{H}_{i}$ is defined as an agent $i$'s space of observation and action histories. An agent is equipped with a policy $\pi_{i}: \mathcal{H}_{i} \rightarrow \Delta(\mathcal{A}_{i})$.

        In this problem, there are two groups of agents: a set of controlled agents denoted as $\tilde{C}$ and a set of uncontrolled agents denoted as $\tilde{U}$. As NAHT is an open system, at the beginning of each interactive process (episode), a team of agents $\tilde{N}$, consisting of $\tilde{c}$ agents drawn from the controlled agent pool $\tilde{C}$ and $\tilde{u}$ agents drawn from the uncontrolled agent pool $\tilde{U}$. For simplicity, each controlled agent is characterized by its policy, and the resulting set of controlled agents is expressed as $\tilde{C}(\theta) = \{\pi_{i}^{\theta}\}_{i=1}^{\tilde{c}}$. Therefore, the above sampling procedure is denoted as $X(\tilde{U}, \tilde{C}(\theta))$. The aim of NAHT is learning parameters $\theta$ to solve the following optimization problem:
        \begin{equation}
            \max_{\theta} \mathbb{E}_{\pi^{(\tilde{N})} \sim X(\tilde{U}, \tilde{C}(\theta))} \left[ \sum_{t=0}^{T} \gamma^{t} R_{t} \right],
        \label{eq:naht-objective}
        \end{equation}
        where $\pi^{(\tilde{N})}$ denotes the joint policy of a team of controlled agents. 

    \subsection{A Set of Cooperative Games and Shapley Value}
    \label{subsec:shapley-value}
    % $v_{C}^{1}(D) = 1$ if $C \subset D$, otherwise, $v_{C}^{1}(D) = 0$. 
        \begin{definition}
        \label{def:game-set}
             We first define such a function that $v_{C, s}^{z} = z$, for any $z \in \mathbb{R}$. If $C \subseteq D$, $v_{C, s}^{z}(D) = z$, otherwise, $v_{C, s}^{z}(D) = 0$. For a cooperative game described by a characteristic function $v: 2^{N} \rightarrow \mathbb{R}_{\geq 0}$, we can describe a set of such games over a group of agents $N$ by a set of basis games $\{v_{C}^{1} \ | \ \emptyset \neq C \subseteq N \}$, such as $G = \{ w \ | \ w = \sum k_{C} v_{C}^{1}, k_{C} \in \mathbb{R}, \emptyset \neq C \subseteq N \}$. The analytic form of $k_{C}$ is represented as $k_{C} = \sum_{T \subseteq C} (-1)^{|C| - |T|} v(T)$.
        \end{definition}

        % \textcolor{black}{[JIANHONG: This could be changed to subgames.]}
        \textbf{A Set of Cooperative Games.} A cooperative game is usually described as a characteristic function $v: 2^{N} \rightarrow \mathbb{R}_{\geq 0}$, where $N$ is a generic agent set. An arbitrary cooperative game $v$ can be uniquely determined by a set of basis games denoted as $\{v_{C}^{1} \ | \ \emptyset \neq C \subseteq N \}$~\citep{dubey1975uniqueness}, as Definition~\ref{def:game-set} shows. It can be observed that the vector space $G$ is isomorphic to the real coordinate space $\mathbb{R}^{2^{|N|}-1}$. The basis of $G$ includes $2^{|N|}-1$ basis games, so dim $G$ is $2^{|N|} - 1$, equal to the dimension of $\mathbb{R}^{2^{|N|}-1}$. This implies that it is feasible to study cooperative games through the perspective of a real coordinate space. In this paper, we focus on studying the cooperative game $v$ approximation on the set of all agents, so $v(N)$ and $v_{C}^{1}(N)$ are in consideration only, instead of other coalitions $C \neq N$. According to previous work in MARL~\cite{wang2020shapley}, we consider the superadditive game class, which has been shown to fit cooperative MARL (see Appendix~\ref{subsec:superadditive-game-cmarl} for more details). The condition for $G$ restricted to superadditive games is $k_{C} \geq 0$.

        \begin{theorem}[\cite{dubey1975uniqueness}]
        \label{thm:shapley-axioms}
            Shapley value is a unique payoff allocation function, defined on $G$, which satisifes Efficiency, Symmetry and Additivity.
        \end{theorem}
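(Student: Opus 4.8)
The plan is to exploit the linear-algebraic structure set up in Definition~\ref{def:game-set}: since every game $v \in G$ has the \emph{unique} decomposition $v = \sum_{\emptyset \neq C \subseteq N} k_C\, v_C^1$ with $k_C = \sum_{T \subseteq C}(-1)^{|C|-|T|} v(T)$ (M\"obius inversion), any allocation rule obeying Additivity is completely determined by its values on the basis games $\{v_C^1\}$. Hence the theorem reduces to two parts: \textbf{(i) existence} --- the classical Shapley value $\phi_i(v) = \sum_{S \subseteq N\setminus\{i\}} \frac{|S|!\,(|N|-|S|-1)!}{|N|!}\bigl(v(S\cup\{i\}) - v(S)\bigr)$ satisfies Efficiency, Symmetry and Additivity on $G$; and \textbf{(ii) uniqueness} --- any allocation $\psi$ satisfying the three axioms must agree with $\phi$ on every $v_C^1$, and therefore on all of $G$.

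For (i) I would verify the three properties straight from the formula: Additivity is immediate because $\phi$ is linear in $v$; Efficiency, $\sum_{i} \phi_i(v) = v(N)$, follows from the standard argument that averages marginal contributions over all $|N|!$ orderings of the players and telescopes; Symmetry holds because interchangeable players have identical marginal-contribution profiles. For (ii), fix a basis game $v_C^1$. By Symmetry every player in $C$ receives a common value $a_C$ and every player outside $C$ a common value $b_C$; since any $i \notin C$ is a null player in $v_C^1$ (its marginal contribution to every coalition is zero), the null-player property gives $b_C = 0$; then Efficiency forces $|C|\,a_C = v_C^1(N) = 1$, so $a_C = 1/|C|$. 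Thus $\psi_i(v_C^1)$ equals $1/|C|$ when $i \in C$ and $0$ otherwise --- exactly $\phi_i(v_C^1)$ --- and expanding an arbitrary $v = \sum_C k_C v_C^1$ by Additivity yields $\psi = \phi$ on $G$. I would also remark that restricting $G$ to the superadditive cone ($k_C \ge 0$) leaves this intact: the basis games lie in the cone, the coefficients $k_C$ remain unique, and Additivity carries the basis values over to any conic combination.

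The step I expect to be the main obstacle is pinning down $b_C = 0$ --- that players outside the carrier $C$ receive nothing --- since this is the content usually carried by a separate Null-Player (Dummy) axiom rather than by Efficiency, Symmetry, or Additivity alone. The delicate point is therefore to make that property legitimately available: either it is built into the precise form in which the axioms are stated on the vector space $G$ following \cite{dubey1975uniqueness}, or it is recovered from Additivity applied to the zero game together with Symmetry. I would be careful to phrase the axioms in exactly the form needed (linearity rather than bare additivity, plus the null-player condition) so that the reduction to the unanimity basis is airtight; everything after that is routine linear algebra together with the classical marginal-contribution identities.
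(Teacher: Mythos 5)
The paper never proves this theorem: it is imported verbatim from \cite{dubey1975uniqueness}, so the only question is whether your sketch is sound as a self-contained argument. Your existence half is fine, and your reduction of uniqueness to the unanimity basis $\{v_C^1\}$ via the M\"obius coefficients $k_C$ is the standard route. The genuine gap is exactly the step you flagged and then did not close: $b_C=0$ for players outside $C$ is \emph{not} derivable from Efficiency, Symmetry and Additivity as literally stated. The equal-division rule $\psi_i(v)=v(N)/|N|$ is linear (hence additive), symmetric and efficient on all of $G$, and it remains well defined on the superadditive cone $k_C\ge 0$, yet on a basis game $v_C^1$ with $C\neq N$ it gives every player $1/|N|$ instead of $1/|C|$ to members of $C$ and $0$ to outsiders. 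So the three axioms alone do not single out the Shapley value, and your proposed fallback --- recovering the null-player property ``from Additivity applied to the zero game together with Symmetry'' --- fails: Additivity on the zero game only yields $\psi(0)=0$, and Symmetry says nothing about players who are not substitutes of one another.

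To make the argument airtight you must strengthen the axiom set to what Shapley and Dubey actually use: either an explicit dummy/null-player axiom, or Shapley's carrier axiom (efficiency required on every carrier of the game), whose restriction-to-carriers formulation subsumes the null-player property; in Dubey's treatment the domain restriction and the carrier/transfer formulation do this work. With that reading of ``Efficiency,'' your computation $b_C=0$, $|C|\,a_C=v_C^1(N)=1$ goes through, and Additivity (really linearity over the $k_C$-expansion, which is also how the paper uses the axiom in Eq.~\eqref{eq:shapley-value-linear}) extends agreement on the basis to all of $G$. As written, though, the uniqueness half of your proposal is incomplete, because the needed property is assumed rather than obtained from the stated axioms.
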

        \textbf{Shapley Value.} Shapley value $\phi: G \rightarrow \mathbb{R}^{|N|}$ is a multidimensional linear transformation defined on the cooperative game space $G$, given that there are $|N|$ agents in total. Each dimension of $\phi$ indicates the payoff allocation to an agent. It has been proved that Shapley value is the unique payoff allocation method under $G$ that satisfies the following axioms: Efficiency, Additivity and Symmetry~\citep{dubey1975uniqueness}, highlighted in Theorem~\ref{thm:shapley-axioms}. Efficiency and Symmetry have been well exploited in multi-agent reinforcement learning~\citep{wang2020shapley,wang2022shaq}, but Additivity still lacks attention. Formally, Additivity means that $\phi(w_{1}+w_{2}) = \phi(w_{1}) + \phi(w_{2})$, for any $w_{1},w_{2} \in G$. If we consider $m$ possible games, then we have the following expression: $\phi(\sum_{i=1}^{m} w_{i}) = \sum_{i=1}^{m} \phi(w_{i})$, where $w_{1}, w_{2}, ..., w_{m} \in G$ and $\sum_{i=1}^{m} w_{i}$ can be seen as a game that is reproduced by the above games. Since the game space $G$ is a vector space, Additivity has been generalized to Linearity in the previous literature~\citep{dubey1975uniqueness,young1985monotonic}, such that $\phi(\sum_{i=1}^{m} \alpha_{i} w_{i}) = \sum_{i=1}^{m} \alpha_{i} \phi(w_{i})$, for $\alpha_{i} \in \mathbb{R}$. Linearity can lead to Additivity, but the reverse is not necessarily correct with no consideration of Homogeneity. In this paper, we will focus on how Linearity can be leveraged to address problems of open multi-agent systems, e.g. NAHT.
    
\section{A Model for N-Agent Ad Hoc Teamwork: State-Space Cooperative Games}
\label{sec:a-model-for-n-agent-ad-hoc-teamwork}
    % \yang{I think, here, we need to add a sentence to explain what the meaning of state in NAHT to help reviewers following the definition. If I understand it correctly, one state "s" should represent "agent number/types and other states" in the open multi-agent system?}
    Fitting the dynamic environment setting in NAHT, we extend the game space $G$ to the state space $\mathcal{S}$ of Markov games~\cite{littman1994markov}, forming a set of state-space cooperative games. In more detail, for each state $s \in \mathcal{S}$ we have a vector space such as $G(s) = \{ w_{s} \ | \ w_{s} = \sum k_{C} v_{C, s}^{1}, k_{\mathcal{C}} \in \mathbb{R}, \emptyset \neq C \subseteq \tilde{N} \}$, which is generated by a set of basis games, such as $\{v_{C,s}^{1} \ | \ \emptyset \neq C \subseteq \tilde{N} \}$. Given a fixed state $s$, the vector space $G(s)$ is still isomorphic to a real coordinate space $\mathbb{R}^{(2^{|\tilde{N}|} - 1)}$, consistent with the structure of $G$ defined over the agent set $\tilde{N}$. As a result, all properties for $G$ also hold for $G(s)$.
    \begin{definition}
    \label{def:ad-hoc-teamwork-in-cooperative-game}
        An n-agent ad hoc teamwork process is described as a set of state-space cooperative games such as $G_{\texttt{NAHT}} = \oplus_{s \in \mathcal{S}} G(s)$, characterizing an team of agents $\tilde{N}$. Each NAHT process belongs to $G_{\texttt{NAHT}}$ is corresponding to varying $\tilde{N}$ with different agent-type composition. Each basis game $v_{C, s}^{1}$ predicts the state value for the scenario at state $s \in \mathcal{S}$, with at least $|C|$ agents of an agent-type composition determined by parameterization of $k_{C}$.
    \end{definition}
    Restricted to the context of NAHT, we describe a known process by a global space denoted by $G_{\texttt{NAHT}} = \oplus_{s \in \mathcal{S}} G(s)$, where $G(s)$ is a subspace with a fixed state $s \in \mathcal{S}$ to capture the composition of agent types and the team size, as described in Definition~\ref{def:ad-hoc-teamwork-in-cooperative-game}, as well as $\oplus$ indicates the direct sum operation. Literally, a subgame for the state $s$, denoted by $v_{s} \in G(s)$, is formed by the basis games $v_{C,s}^{1}$ up to the coalition whose size is equal to the agent group size $|\tilde{N}|$, such that $v_{s} = \sum_{C \subseteq \tilde{N}} k_{C} v_{C,s}^{1}$.
    \begin{wrapfigure}{r}{0.27\textwidth}
      \centering
      \includegraphics[width=0.27\textwidth]{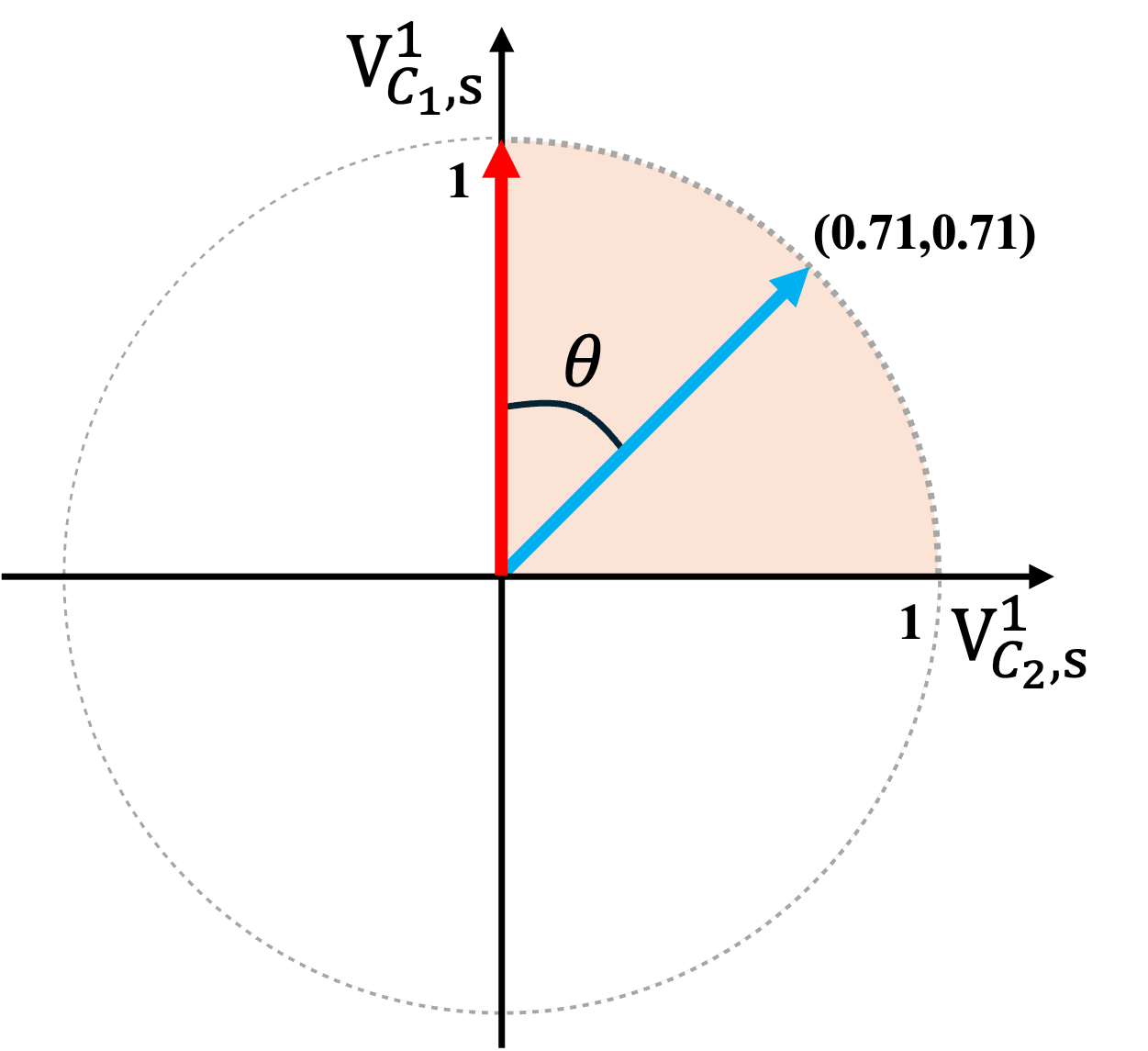}
      \caption{Example of the state-space cooperative game model, described by a spanning set including two vectors with the fixed state $s$. The red and blue arrowed lines are two games, such as $v$ and $w$, whose difference is measured by the $\theta$. The shading area indicates the spectrum of generated cooperative games for $k_{C} \geq 0$ and $\sum k_{C} = 1$, which occupies the first quadrant.}
      \label{fig:state-space-cooperative-game-case}
      \vspace{-20mm}
    \end{wrapfigure}
    % \footnote{$\tilde{N}$ is determined by the task to experience, consistent with the NAHT setting in~\cite{wang2025n}, while $N$ indicates a sufficiently large set of agents, agnostic to tasks, consistent to the viewpoint about modeling AHT with cooperative game theory in~\cite{wang2024open}.}
    $G_{\texttt{NAHT}}$ is isomorphic to $\mathbb{R}^{(2^{|\tilde{N}|} - 1) \times d}$, if we simply consider a finite state space such that $|\mathcal{S}|=d$. Note that the controlled agent set denoted as $\tilde{C}$ is a subset of $\tilde{N}$. Considering uncontrolled agents denoted as $\tilde{U} := \tilde{N} \backslash \tilde{C}$ is necessary for evaluating the controlled agents' policies. If we ignore the uncontrolled agents, the game space with a fixed state will be reduced to a subspace $W \subset \mathbb{R}^{(2^{|\tilde{C}|} - 1)}$, where coordinates not contributing to $\tilde{C}$ are always zeros. The effect of uncontrolled agents is totally ignored. This will induce a bias measured by the angle between two game vectors representing two scenarios. For example, $\theta$ is the angle between two game vectors such that $v_{s} \in W = \text{span}\{v_{\{1\}, s}^{1}\}$ and $w_{s} \in \text{span}\{v_{\{1\}, s}^{1}, v_{\{1,2\}, s}^{1}\} \backslash W$, as shown in Figure~\ref{fig:state-space-cooperative-game-case}. \textbf{As well as providing a visualization of NAHT, the main purpose of this model is for formalizing the use of Shapley value when solving NAHT. More specifically, this supports why coalitions up to $\tilde{N}$ must be considered during the NAHT process, when imposing axiomatic characterization in Section~\ref{sec:shapley-value-for-n-agent-ad-hoc-teamwork}.}

\section{Shapley Value for N-Agent Ad Hoc Teamwork}
\label{sec:shapley-value-for-n-agent-ad-hoc-teamwork}
    As described in Definition~\ref{def:ad-hoc-teamwork-in-cooperative-game}, a NAHT process can be represented by $G_{\texttt{NAHT}}=\oplus_{s \in \mathcal{S}} G(s)$, where each subspace $G(s)$ is generated by a set of basis games with a fixed state $s \in \mathcal{S}$, such as $\{v_{C,s}^{1} \ | \ \emptyset \neq C \subseteq \tilde{N} \}$. Therefore, we can uniquely describe a game $v_{s} \in DG(s)$ by the following formula: $v_{s} = \sum_{\emptyset \neq C \subseteq \tilde{N}} k_{C} v_{C, s}^{1}$, given a fixed state $s$. By Definition~\ref{def:game-set}, we denote $v_{C, s}^{z} = z v_{C, s}^{1}$, for any $z \in \mathbb{R}$, such that if $C \subseteq D$, $v_{C, s}^{z}(D) = z$, otherwise, $v_{C, s}^{z}(D) = 0$. 

    \begin{lemma}
    \label{lemm:subgame-transform}
        Given a fixed state $s \in {\cal S}$ each subgame $v_{s} \in G(s)$, it can be uniquely represented by $v_{s} = \sum_{\emptyset \neq C \subseteq \tilde{N}} \frac{k_{C}}{v_{s}(\tilde{N})} \cdot v_{C,s}^{v_{s}(\tilde{N})}$.
    \end{lemma}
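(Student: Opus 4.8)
The plan is to reduce the claimed identity to the unique basis decomposition guaranteed by Definition~\ref{def:game-set}, combined with the homogeneity relation $v_{C,s}^{z} = z\, v_{C,s}^{1}$ introduced just before the lemma. First I would record that, applying Definition~\ref{def:game-set} to the fixed-state vector space $G(s)$ (which is isomorphic to $\mathbb{R}^{2^{|\tilde N|}-1}$ and hence inherits all properties of $G$), every $v_{s} \in G(s)$ admits the \emph{unique} representation $v_{s} = \sum_{\emptyset \neq C \subseteq \tilde{N}} k_{C}\, v_{C,s}^{1}$, with $k_{C} = \sum_{T \subseteq C}(-1)^{|C|-|T|} v_{s}(T)$.

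Second, I would verify that $v_{s}(\tilde{N}) \neq 0$, so that dividing by it is meaningful. Evaluating the decomposition at the grand coalition and using $v_{C,s}^{1}(\tilde{N}) = 1$ for every $C \subseteq \tilde{N}$ gives $v_{s}(\tilde{N}) = \sum_{\emptyset \neq C \subseteq \tilde{N}} k_{C}$; under the superadditive-game restriction $k_{C} \geq 0$, with the degenerate zero game excluded, this sum is strictly positive. Third, I would substitute the homogeneity identity: since $v_{C,s}^{v_{s}(\tilde{N})} = v_{s}(\tilde{N})\, v_{C,s}^{1}$, each summand obeys $\tfrac{k_{C}}{v_{s}(\tilde{N})}\, v_{C,s}^{v_{s}(\tilde{N})} = k_{C}\, v_{C,s}^{1}$, and summing over all nonempty $C \subseteq \tilde{N}$ recovers exactly $v_{s}$. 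This establishes existence of the stated representation.

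For uniqueness I would note that multiplying a basis of $G(s)$ by the nonzero scalar $v_{s}(\tilde{N})$ again yields a basis, namely $\{ v_{C,s}^{v_{s}(\tilde{N})} : \emptyset \neq C \subseteq \tilde{N} \}$, so coordinates with respect to it are uniquely determined; since the coefficient of $v_{C,s}^{v_{s}(\tilde{N})}$ in the above expression equals $k_{C}/v_{s}(\tilde{N})$ and the $k_{C}$ are themselves uniquely pinned down by Definition~\ref{def:game-set}, the representation is unique.

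The only genuine subtlety — the main obstacle, such as it is — is the justification that $v_{s}(\tilde{N}) \neq 0$: this is precisely where the superadditive-game restriction (equivalently, the assumption that the state value is a nonzero nonnegative quantity) enters, and I would state it explicitly rather than leaving it implicit. Everything else is a one-line rescaling of the Dubey--Neyman--Shapley basis decomposition.
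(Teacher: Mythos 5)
Your proposal is correct and follows essentially the same route as the paper's proof: rewrite each coefficient as $k_{C} = \tfrac{k_{C}\,v_{s}(\tilde{N})}{v_{s}(\tilde{N})}$ and absorb the numerator into the basis game via the homogeneity relation $v_{C,s}^{v_{s}(\tilde{N})} = v_{s}(\tilde{N})\,v_{C,s}^{1}$. Your additional care in justifying $v_{s}(\tilde{N}) \neq 0$ (via $v_{s}(\tilde{N}) = \sum_{\emptyset \neq C \subseteq \tilde{N}} k_{C}$ under the superadditive restriction) and in spelling out uniqueness through the rescaled basis is a welcome tightening of a point the paper leaves implicit, but it does not change the argument.
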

    By Lemma~\ref{lemm:subgame-transform}, we can represent a subgame as follows: $v_{s} = \sum_{\emptyset \neq C \subseteq \tilde{N}} \frac{k_{C}}{v_{s}(\tilde{N})} \cdot v_{C,s}^{v_{s}(\tilde{N})}$. Imposing a linear operator $\phi: G(s) \rightarrow \mathbb{R}^{|\tilde{N}|}$ on the subgame $v_{s}$, we derive the following formula such that
    \begin{equation}
        \phi(v_{s}) = \phi \left( \sum_{\emptyset \neq C \subseteq \tilde{N}} \frac{k_{C}}{v_{s}(\tilde{N})} \cdot v_{C, s}^{v_{s}(\tilde{N})} \right).
    \label{eq:shapley-value-impose}
    \end{equation}
    By the Linearity axiom of $\phi$, we can transform Eq.~\eqref{eq:shapley-value-impose} as follows:
    \begin{equation}
        \phi(v_{s}) = \sum_{\emptyset \neq C \subseteq \tilde{N}} \frac{k_{C}}{v_{s}(\tilde{N})} \cdot \phi \left( v_{C, s}^{v_{s}(\tilde{N})} \right).
    \label{eq:shapley-value-linear}
    \end{equation}

    \subsection{Realization of Efficiency and Additivity Axioms in Markov Games}
    \label{subsec:links-to-rl}
            In this section, we will discuss how the Efficiency and Additivity axioms are realized in Markov games~\cite{littman1994markov}. For brevity, we rewrite $v_{C_{i}, s}^{v_{s}(\tilde{N})}(\tilde{N})$ as $V(C_{i}, s)$ and $v_{s}(\tilde{N})$ as $V(\tilde{N}, s)$, respectively, and $V(C_{1}, s) = V(C_{2}, s) = \cdots = V(C_{m}, s) = V(\tilde{N}, s)$, according to the definition of $v_{C,s}^{z}=z$. 
        
            In the context of Markov games, a subgame evaluating a group of agents $\tilde{N}$, $v_{s_{t}}(\tilde{N}) \in G(s)$, as a state value, can be defined by $\mathbb{E}_{\pi}[\sum_{\tau=0}^{T-t-1} \gamma^{\tau} R_{t+\tau} \ | \ s_{t}]$, following the convention of reinforcement learning~\cite{sutton1988learning}[Chap. 3], where $\pi = \times_{i \in \tilde{N}} \pi_{i}$ is a joint stationary policy of agents belonging to $\tilde{N}$ and $T$ is an episode length. By the Bellman equation with n-step returns, we have the follow-up expansion for $V(\tilde{N}, s_{t})$:
            \begin{equation}
                V(\tilde{N}, s_{t}) = \mathbb{E}_{\pi} \left[ \sum_{\tau=0}^{k-1} \gamma^{\tau} R_{t+\tau} + \gamma^{k} V(\tilde{N}, s_{t+k}) \ \mathlarger{\mathlarger{\mathlarger{\vert}}} \ s_{t} \right].
            \label{eq:game-value-expansion}
            \end{equation}
            Since $V(C_{1}, s) = V(C_{2}, s) = \cdots = V(C_{m}, s) = V(\tilde{N}, s)$, we have the following formula:
            \begin{equation}
                V(C_{i}, s_{t}) = \mathbb{E}_{\pi} \left[ \sum_{\tau=0}^{k-1} \gamma^{\tau} R_{t+\tau} + \gamma^{k} V(\tilde{N}, s_{t+k}) \ \mathlarger{\mathlarger{\mathlarger{\vert}}} \ s_{t} \right].
            \label{eq:game-value-expansion-coalition}
            \end{equation}
            Next, we will discuss how the value of $k$ is uniquely determined for each $V(C_{i}, s_{t})$ under a reasonable assumption. For conciseness, we denote the n-step return as $G_{t:t+n}$ such that
            \begin{equation*}
                G_{t:t+n} = \sum_{\tau=0}^{n-1} \gamma^{\tau} R_{t+\tau} + \gamma^{n} V(\tilde{N}, s_{t+n}).
            \end{equation*}
    
            \begin{assumption}
            \label{assm:correspondence-subgame-n-step-return}
                 For a state $s_{t} \in \mathcal{S}$, the basis games ordered in an ascending sequence of the coalition sizes, $(v_{C_{1}, s_{t}}^{v_{s_{t}}(\tilde{N})}(\tilde{N}), v_{C_{2}, s_{t}}^{v_{s_{t}}(\tilde{N})}(\tilde{N}), ..., v_{C_{m}, s_{t}}^{v_{s_{t}}(\tilde{N})}(\tilde{N}))$, are associated with a sequence of $n$-step returns, $\left( \mathbb{E}_{\pi}[G_{t:t+1} | s_{t}], \mathbb{E}_{\pi}[G_{t:t+2} | s_{t}], ..., \mathbb{E}_{\pi}[G_{t:t+n} | s_{t}] \right)$, such that $\mathbb{E}_{\pi}[G_{t:t+k} | s_{t}] = \sum_{i=p}^{q} v_{C_{i}, s_{t}}^{v_{s_{t}}(\tilde{N})}(\tilde{N})$, for $1 \leq k \leq n$, where $|C_{p-1}| < |C_{p}| = |C_{p+1}| = \cdots = |C_{q}| < |C_{q+1}|$ and $n < m$.
            \end{assumption}
    
            By Assumption~\ref{assm:correspondence-subgame-n-step-return} (\textcolor{black}{see Appendix~\ref{sec:explanation-of-assumptions} for more justification}), we can obtain the formula such that $v_{C_{i}, s_{t}}^{v_{s_{t}}(\tilde{N})}(\tilde{N}) = \frac{1}{q(i)} \cdot \mathbb{E}_{\pi}[G_{t:t+n(i)} | s_{t}]$ due to identical values across $v_{C_{i}, s_{t}}^{v_{s_{t}}(\tilde{N})}(\tilde{N})$, where $q(i)$ denotes the total number of coalitions that have the same size as $C_{i}$, and $n(i)$ denotes the index of the corresponding n-step return $G_{t:t+n(i)}$, with respect to coalition $C_{i}$. Note that coalitions with the identical sizes are assigned the same $n(i)$. By substitution, we rewrite Eq~\eqref{eq:game-value-expansion-coalition} as follows:
            \begin{equation}
                V(C_{i}, s_{t}) = \frac{1}{q(i)} \cdot \mathbb{E}_{\pi} \left[ \sum_{\tau=0}^{n(i)-1} \gamma^{\tau} R_{t+\tau} + \gamma^{n(i)} V(\tilde{N}, s_{t+n(i)}) \ \mathlarger{\mathlarger{\mathlarger{\vert}}} \ s_{t} \right].
            \label{aligned-coalition-value-to-expected-value}
            \end{equation}
    
             We now impose a multidimensional linear operator denoted by $\phi: G(s) \rightarrow \mathbb{R}^{|\tilde{N}|}$ on $V(C_{i}, s_{t})$. By linearity, it follows that:
            \begin{equation}
                \phi(V(C_{i}, s_{t})) = \frac{1}{q(i)} \cdot \mathbb{E}_{\pi} \left[ \sum_{\tau=0}^{n(i)-1} \gamma^{\tau} \phi(R_{t+\tau}) + \gamma^{n(i)} \phi(V(\tilde{N}, s_{t+n(i)})) \ \mathlarger{\mathlarger{\mathlarger{\vert}}} \ s_{t} \right].
            \label{eq:impose-shapley-value}
            \end{equation}

        \subsubsection{Efficiency Axiom}
            \begin{proposition}
            \label{prop:shapley-machine-reward}
                Given the condition that $\sum_{i=1}^{|\tilde{N}|} \phi_{i}(R_{t}) = R_{t}$, the payoff allocation defined on returns $R_{t}$, can be expressed as:
                \begin{equation}
                    \phi_{i}(R_{t}) := R_{t} - \sum_{j \neq i} \left( \ \phi_{j}(V(\tilde{N}, s_{t})) - \gamma \phi_{j}(V(\tilde{N}, s_{t+1})) \ \right).
                \label{eq:individual-reward-approximation}
                \end{equation}
            \end{proposition}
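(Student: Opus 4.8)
The plan is to obtain Eq.~\eqref{eq:individual-reward-approximation} in three short moves: reduce the $n$-step Bellman expansion to a one-step reward decomposition, push the linear operator $\phi$ through it, and then close the resulting underdetermined relation using the Efficiency hypothesis $\sum_{j=1}^{|\tilde{N}|}\phi_j(R_t)=R_t$. Concretely, I would first specialize Eq.~\eqref{eq:game-value-expansion} to $k=1$, so that $V(\tilde{N},s_t)=\mathbb{E}_{\pi}[R_t+\gamma V(\tilde{N},s_{t+1})\mid s_t]$. Reading this at the level of a single sampled transition $s_t\to s_{t+1}$ — consistent with the TD($\lambda$)-style targets the paper ultimately estimates — it becomes $R_t = V(\tilde{N},s_t) - \gamma V(\tilde{N},s_{t+1})$, which in the language of state-space games expresses the scalar $R_t$ as a difference of two state-value games, $v_{s_t}$ evaluated at $\tilde{N}$ and $\gamma$ times $v_{s_{t+1}}$ evaluated at $\tilde{N}$, both viewed inside $G_{\texttt{NAHT}}=\oplus_{s}G(s)$.

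Next I would apply the multidimensional linear operator $\phi\colon G(s)\to\mathbb{R}^{|\tilde{N}|}$ — which satisfies Additivity, hence Linearity, by Theorem~\ref{thm:shapley-axioms} and the discussion following it — to both sides and read off the $i$-th coordinate, giving $\phi_i(R_t)=\phi_i(V(\tilde{N},s_t))-\gamma\,\phi_i(V(\tilde{N},s_{t+1}))$ for every agent $i$. This is exactly the manipulation already used to produce Eq.~\eqref{eq:impose-shapley-value}, here specialized to the singleton coalitions for which $n(i)=1$ and $q(i)=|\tilde{N}|$, so no new computation is needed.

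Finally I would invoke the Efficiency condition assumed in the statement, $\sum_{j=1}^{|\tilde{N}|}\phi_j(R_t)=R_t$, to isolate the target term as $\phi_i(R_t) = R_t - \sum_{j\neq i}\phi_j(R_t)$, and substitute the expression from the previous step into each summand with $j\neq i$. This yields $\phi_i(R_t)=R_t-\sum_{j\neq i}\big(\phi_j(V(\tilde{N},s_t))-\gamma\,\phi_j(V(\tilde{N},s_{t+1}))\big)$, which is precisely Eq.~\eqref{eq:individual-reward-approximation}, so the proof is complete.

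The step I expect to be the main obstacle is the second one: making rigorous the claim that $\phi$ may be pushed through the Bellman relation, since $V(\tilde{N},s_t)$ and $V(\tilde{N},s_{t+1})$ belong to the \emph{distinct} subspaces $G(s_t)$ and $G(s_{t+1})$, and since the scalar reward $R_t$ must itself be identified with a genuine element of the game space on which $\phi$ acts. Handling this carefully requires noting that the per-state operators $\{\phi|_{G(s)}\}_{s\in\mathcal{S}}$ assemble into a single linear map on the direct sum $G_{\texttt{NAHT}}$, and checking that the combinations used either remain within the admissible superadditive class ($k_C\ge 0$) or that Linearity extends to the ambient vector space regardless; once this bookkeeping is in place, the remaining algebra is immediate and the expectation (if one chooses to retain it rather than argue at the sample level) carries through all three steps unchanged.
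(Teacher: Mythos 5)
Your proposal is correct and follows essentially the same route as the paper's own proof: establish the per-agent one-step relation $\phi_i(V(\tilde{N},s_t))=\phi_i(R_t)+\gamma\,\phi_i(V(\tilde{N},s_{t+1}))$, use the Efficiency hypothesis to write $\phi_i(R_t)=R_t-\sum_{j\neq i}\phi_j(R_t)$, and substitute. The technical point you flag (pushing $\phi$ through the Bellman relation across the subspaces $G(s_t)$ and $G(s_{t+1})$, and the status of the expectation) is exactly the step the paper itself treats informally ("it is not difficult to observe", appealing to the factored Dec-POMDP result), so your extra care is a refinement rather than a divergence.
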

            Note that $\phi(R_{t+\tau})$ in Eq.~\eqref{eq:impose-shapley-value} has not yet been defined. For it to be well defined on the subspace $G(s)$, it is required to represent $\phi(R_{t+\tau})$ in the form of $\phi(V(\tilde{N}, s))$. To achieve this, we now define $\phi(R_{t})$ by introducing the Efficiency axiom. To satisfy the Efficiency axiom such that $\sum_{i=1}^{|\tilde{N}|} \phi_{i}(V(\tilde{N}, s_{t})) = V(\tilde{N}, s_{t})$, it is reasonable to assume that $\sum_{i=1}^{|\tilde{N}|} \phi_{i}(R_{t}) = R_{t}$. In other words, each agent's value expansion can be expressed independently with its own $\phi_{i}(R_{t})$, which will be detailed in the next subsection. Then, the resulting $\phi_{i}(R_{t})$ is detailed in Proposition~\ref{prop:shapley-machine-reward}.

        \subsubsection{Additivity Axiom}
            Recall that $v_{C_{i}, s}^{v_{s}(\tilde{N})}(\tilde{N})$ is denoted by $V(C_{i}, s)$, and $v_{s}(\tilde{N})$ is denoted by $V(\tilde{N}, s)$. Substituting Eq.~\eqref{eq:impose-shapley-value} into the Additivity axiom detailed in Eq.~\eqref{eq:shapley-value-linear} and presuming that there are $m$ n-step returns $G_{t:t+n}$ in total with no loss of generality, we obtain the following formula:
            \begin{equation}
                \phi(V(\tilde{N}, s_{t})) = \sum_{n=1}^{m} \frac{k_{C_{n}}}{V(\tilde{N}, s_{t})} \cdot \mathbb{E}_{\pi} \left[ \sum_{\tau=0}^{n-1} \gamma^{\tau} \phi(R_{t+\tau}) + \gamma^{n} \phi(V(\tilde{N}, s_{t+n})) \ \mathlarger{\mathlarger{\mathlarger{\vert}}} \ s_{t} \right].
            \end{equation}
    
            By linearity of the expectation operator, we obtain that:
            \begin{equation}
                \phi(V(\tilde{N}, s_{t})) = \mathbb{E}_{\pi} \left[ \sum_{n=1}^{m} \frac{k_{C_{n}}}{V(\tilde{N}, s_{t})} \cdot \sum_{\tau=0}^{n-1} \gamma^{\tau} \phi(R_{t+\tau}) + \gamma^{n} \phi(V(\tilde{N}, s_{t+n})) \ \mathlarger{\mathlarger{\mathlarger{\vert}}} \ s_{t} \right],
            \label{eq:shapley-value-expected-game}
            \end{equation}
            where the term in the $\mathbb{E}_{\pi}[\cdot]$ can be seen as a form of truncated $\lambda$-return~\cite{sutton1988learning}, with $\frac{k_{C_{n}}}{V(\tilde{N}, s)}$ as a weight, generated by a distribution parameterized by $\lambda$. We will specify this in the next section.

    \subsection{Implementation of Additivity, Efficiency and Symmetry in N-Agent Ad Hoc Teamwork}
    \label{subsec:implementation-in-marl}
        In this section, we will discuss how the conditions derived above can be implemented in NAHT, to fulfill the axiomatic characterization: Additivity, Efficiency and Symmetry. From the perspective of various composition of the axiomatic characterization, we propose a new algorithm named Shapley Machine, and interpret the existing algorithm, POAM~\cite{wang2025n}, in Section~\ref{subsec:summary-of-proposed-algorithms}.
        
        \subsubsection{Additivity Axiom}
            First, we will learn $\phi_{i}(V(\tilde{N}, s_{t}))$ directly for each controlled agent $i \in \tilde{C}$. For brevity, we can represent each $\phi_{i}(V(\tilde{N}, s_{t}))$ as $V_{i}(s_{t})$, and $\phi_{i}(R_{t})$ as $R_{t,i}$, respectively. By sampling trajectories under a joint policy $\pi$, we derive the following TD error $\delta_{t,i}$ for each agent $i$ such that
            \begin{equation}
                \delta_{t,i} = \sum_{n=1}^{m} \frac{k_{C_{n}}}{V(\tilde{N}, s)} \cdot \sum_{\tau=0}^{n-1} \left( \gamma^{\tau} R_{t+\tau,i} + \gamma^{n} V_{i}(s_{t+n}) \right) - V_{i}(s_{t}).
            \label{eq:td-error-general}
            \end{equation}
            % \begin{proposition}
            %     For the class of superadditive games with basis games $v^1_{C_n,s}$ it holds that $k_{C_n} \geq 0$.
            % \label{prop:positive-Kc}
            % \end{proposition}
            \begin{proposition}
                For the class of superadditive games formed by a set of basis games $\{v^1_{C,s} | \emptyset \neq C \subseteq \tilde{N} \}$, it holds that $k_{C} \geq 0$, for all $\emptyset \neq C \subseteq \tilde{N}$.
            \label{prop:positive-Kc}
            \end{proposition}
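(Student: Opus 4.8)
The natural starting point is the Harsanyi dividend (M\"obius inversion) identity already recorded in Definition~\ref{def:game-set}: for every $\emptyset \neq C \subseteq \tilde{N}$,
\[
k_{C} \;=\; \sum_{T \subseteq C} (-1)^{|C|-|T|}\, v(T),
\qquad\text{equivalently}\qquad
v(C) \;=\; \sum_{\emptyset \neq T \subseteq C} k_{T}.
\]
My plan is to argue by strong induction on $|C|$. The base case $|C|=1$ is immediate, since $v$ takes values in $\mathbb{R}_{\geq 0}$ and hence $k_{\{i\}} = v(\{i\}) \geq 0$; the case $|C| = 2$ is precisely the pairwise superadditivity inequality $v(\{i,j\}) \geq v(\{i\}) + v(\{j\})$, which gives $k_{\{i,j\}} \geq 0$. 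For the inductive step one isolates the top term, $k_{C} = v(C) - \sum_{\emptyset \neq T \subseteq C,\, T \neq C} k_{T}$, and attempts to dominate the sum over nonempty proper subsets of $C$ by $v(C)$, using superadditivity together with the inductive hypothesis that all lower-order dividends are non-negative.

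The step above is exactly where I expect the main obstacle to lie, and it is where plain superadditivity is not by itself sufficient: superadditivity only relates $v(C)$ to \emph{partitions} of $C$ (for instance $v(C) \geq \sum_{i \in C} v(\{i\}) = \sum_{i \in C} k_{\{i\}}$), whereas $\sum_{\emptyset \neq T \subseteq C,\, T \neq C} k_{T}$ ranges over the entire subset lattice of $C$ and collects far more non-negative terms. In fact, non-negativity of all Harsanyi dividends is equivalent to \emph{total monotonicity} of $v$ (non-negativity of every iterated discrete derivative $\Delta_{C} v(\emptyset) = k_{C}$), a condition strictly stronger than superadditivity. So the proof must import precisely this extra structure rather than derive it from the superadditivity inequalities alone.

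The way I would close this gap is to appeal to how the game class is actually \emph{defined} in this work rather than to the superadditivity inequalities in isolation. Following \cite{wang2020shapley} (and the discussion of superadditive games for cooperative MARL referenced in Section~\ref{sec:background}), the class under consideration is the cone generated by the unanimity bases $\{v^{1}_{C,s} \mid \emptyset \neq C \subseteq \tilde{N}\}$ with non-negative coefficients, i.e.\ $G$ restricted to this class is $\{\sum_{C} k_{C} v^{1}_{C,s} : k_{C} \geq 0\}$; every such game is superadditive, being a non-negative combination of the superadditive unanimity games $v^{1}_{C,s}$, and conversely the NAHT construction of Section~\ref{sec:a-model-for-n-agent-ad-hoc-teamwork} together with Assumption~\ref{assm:correspondence-subgame-n-step-return}, which ties the size-ordered bases to a nested sequence of $n$-step returns, yields coefficients that inherit non-negativity from the monotone nesting of those returns. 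With this identification the proposition follows at once. Hence the substantive content is (i) the equivalence ``$k_{C} \geq 0$ for all $C$'' $\Longleftrightarrow$ total monotonicity of $v$, and (ii) verifying that the NAHT value decomposition indeed lands in this cone; the inductive bookkeeping sketched in the first paragraph is then subsumed and not independently needed.
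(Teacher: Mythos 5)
Your proposal departs from the paper's route, and the comparison is worth spelling out. The paper's own proof is very short: it writes down the M\"{o}bius formula for $k_{C}$ from Definition~\ref{def:game-set}, states the superadditivity condition, and then simply asserts---citing \cite{shapley1953value,dubey1975uniqueness}---that superadditivity forces $k_{C} \geq 0$, before transporting the conclusion to $G(s)$ via the isomorphism with $G$. It never attempts the induction you sketch, and it does not engage with the obstacle you identify in your second paragraph. That obstacle is real: nonnegativity of all Harsanyi dividends is total monotonicity, which is strictly stronger than superadditivity. A three-player game with all singleton values $0$, all pair values $1$, and grand-coalition value $1.5$ is superadditive (and maps into $\mathbb{R}_{\geq 0}$), yet the dividend of the grand coalition is $1.5 - 3 + 0 = -1.5 < 0$. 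So the implication the paper's proof leans on does not hold for the superadditive class at face value, and your diagnosis that the inductive step cannot be closed from the superadditivity inequalities alone is exactly right.

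That said, your proposed repair does not prove the proposition as stated either. Declaring the class to be the cone $\{\sum_{C} k_{C} v^{1}_{C,s} : k_{C} \geq 0\}$ makes $k_{C} \geq 0$ true by fiat and only establishes the easy converse inclusion (nonnegative combinations of unanimity games are superadditive); as a proof of ``superadditive $\Rightarrow k_{C} \geq 0$'' it is circular. The appeal to Assumption~\ref{assm:correspondence-subgame-n-step-return} does not supply the missing nonnegativity either: the $n$-step returns $\mathbb{E}_{\pi}[G_{t:t+n} \mid s_{t}]$ are not monotone or nested in $n$ in general (rewards can be negative, values can shrink), so nothing about that correspondence forces the induced coefficients into the nonnegative cone; in the paper the weights $k_{C_{n}}'$ end up nonnegative only because they are subsequently \emph{chosen} from a geometric distribution, which is a modeling decision rather than a derived fact. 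In short, you have correctly located the crux---the step requires total monotonicity, not superadditivity, and the paper's citation-based proof does not bridge that---but your resolution replaces the statement with a definitional one instead of proving it. To make the proposition rigorous one must either restrict the game class to totally monotone games (equivalently, define the class as the nonnegative cone, which is what the background section implicitly does) or downgrade $k_{C} \geq 0$ to an assumption on the class rather than a consequence of superadditivity.
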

            For brevity, we denote $k_{C_{n}}' = \frac{k_{C_{n}}}{V(\tilde{N}, s)}$. Since in definition $V(\tilde{N}, s) > 0$, we can conclude that $k_{C_{n}}' > 0$ is a sufficient condition for $V(\tilde{N}, s)$ belonging to the superadditive game class, according to Proposition~\ref{prop:positive-Kc}. In other words, $k_{C_{n}}' > 0$ facilitates cooperation within a group of agents, following the relationship between superadditive games and cooperative MARL described in~\cite{wang2020shapley} (see Appendix~\ref{subsec:superadditive-game-cmarl} for details). 
            
            Following the convention of TD($\lambda$) in RL~\cite{sutton1988learning}, the values of $(k_{C_{1}}', k_{C_{2}}', \cdots, k_{C_{m}}')$ are generated using a geometric distribution $P_{\lambda}$ with the parameter $0 < \lambda < 1$, such that $k_{C_{i}}' = P_{\lambda}(i)$, resulting $((1-\lambda), (1-\lambda)\lambda, \cdots, (1-\lambda)\lambda^{m-1}, \lambda^{m})$. With this condition, Eq.~\eqref{eq:td-error-general} becomes the TD error of the well-known truncated TD($\lambda$) prediction, shortened as TTD($\lambda$)~\cite{cichosz1994truncating}. 
        
        \subsubsection{Efficiency Axiom}
            % This evidence justifies our assumption about reward shaping to derive the Efficiency axiom for dynamic scenarios.
            By substitution into Eq.~\eqref{eq:individual-reward-approximation}, the $R_{t+\tau,i}$ in Eq.~\eqref{eq:td-error-general} can be expressed as 
            \begin{equation}
                R_{t,i} = R_{t} - \sum_{j \neq i} \left( V_{j}(s_{t}) - \gamma V_{j}(s_{t+1}) \right).
            \label{eq:shape-rewards}
            \end{equation}
            % \subsubsection{Adaptation to More General Games}
            % \label{subsubsec:adaptation-to-more-general-games}
            % \textcolor{black}{Dynamic Bayesian Network factorization is not necessarily equivalent to coalition partitioning.}
            \begin{theorem}[\cite{oliehoek2016concise}]
            \label{thm:factored-dec-pomdp}
                Given an additively factored immediate reward function, for any timestep $t$ there is a factorization scheme of the Dynamic Bayesian Network (describing the transition function), such that the value of a finite-horizon factored Dec-POMDP is decomposable across agents.
            \end{theorem}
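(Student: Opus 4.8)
The plan is to argue by backward induction over the stages-to-go, combining linearity of expectation with the conditional-independence structure encoded in the two-slice dynamic Bayesian network (DBN) of the transition function. Since the statement concerns the value attained under a joint policy, I would first fix a joint policy $\pi$, so that the process induced on the factored state $s = (s^1, \dots, s^{|X|})$ is Markov, and write the finite-horizon value at stage $t$ as $V^\pi_t(s) = \mathbb{E}_\pi\!\left[\sum_{\tau=t}^{T} R_\tau \,\middle|\, s_t = s\right]$. Because the immediate reward is additively factored, $R(s,a) = \sum_{e} R^e(s^{[e]}, a^{[e]})$ with each component $R^e$ depending only on a local scope $[e]$ of state and action variables, linearity of expectation gives $V^\pi_t = \sum_e V^{\pi,e}_t$, so the task reduces to controlling the scope of each $V^{\pi,e}_t$ and then grouping the components by agent.

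Next I would run the induction on stages-to-go. At the terminal stage $V^{\pi,e}_T(s) = \mathbb{E}_\pi[R^e_T \mid s_T = s]$ has scope contained in the state part of $[e]$. For the inductive step, the per-component Bellman backup reads $Q^{\pi,e}_t(s,a) = R^e(s^{[e]}, a^{[e]}) + \sum_{s'} P(s' \mid s, a)\, V^{\pi,e}_{t+1}\big((s')^{\sigma_{t+1}(e)}\big)$, where $\sigma_{t+1}(e)$ is the scope already established at stage $t+1$. Using the DBN factorization $P(s' \mid s, a) = \prod_k P\big((s')^k \mid s^{\mathrm{pa}(k)}, a^{\mathrm{pa}(k)}\big)$, marginalizing over $s'$ only involves the state and action variables that are DBN-parents of the variables in $\sigma_{t+1}(e)$; hence $Q^{\pi,e}_t$ — and, after substituting $\pi$'s action choice, $V^{\pi,e}_t$ — has scope $\sigma_t(e) = [e] \cup \mathrm{pa}\big(\sigma_{t+1}(e)\big)$, obtained from the next-stage scope by one backward pass through the DBN. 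Iterating this recursion over $t$ defines the factorization scheme asserted by the theorem: each value factor is carried by the union of the reward scope and its iterated DBN-preimage. Assigning every component $e$ to the agent(s) whose action variables appear in its scope (breaking ties by any fixed rule) then exhibits $V^\pi_t$ as a sum of terms indexed by agents, i.e. decomposability across agents.

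The main obstacle is the bookkeeping forced by partial observability and by the freedom in how one factors the DBN. In a Dec-POMDP a policy maps individual observation-action histories, not states, to actions, so one must either argue that for a fixed joint policy the value is still an expectation over the induced state-Markov process — in which case the backup above goes through verbatim — or propagate scopes through the observation DBN as well, at which point the choice of DBN factorization becomes the delicate object: it must be chosen so that the backed-up scopes stay aligned with the agent structure rather than collapsing onto the full state and action spaces. Making this precise — that such a factorization scheme exists, and that the (possibly horizon-growing) scopes never destroy the agent-wise split — is the crux; the remaining manipulations, linearity of expectation and the DBN marginalization, are routine.
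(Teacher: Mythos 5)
First, note that the paper does not prove this statement at all: it is imported verbatim as a known result from the cited reference (Oliehoek and Amato's Dec-POMDP monograph), and Appendix~\ref{sec:complete-math-proofs} contains no argument for it. So the comparison here is between your sketch and the standard proof in that source, and your route --- additively factored reward plus linearity of expectation, then backward induction propagating scopes through the two-slice DBN ("back-projection" of scopes), then grouping components by agent --- is indeed the same route that literature takes. In that sense you have identified the correct strategy.

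However, as a proof your proposal has a genuine gap, and it is exactly the one you flag yourself and then set aside as "the crux." Your inductive step writes $Q^{\pi,e}_t(s,a)$ and then says "after substituting $\pi$'s action choice, $V^{\pi,e}_t$ has scope $\sigma_t(e)=[e]\cup\mathrm{pa}(\sigma_{t+1}(e))$" --- but in a Dec-POMDP the joint policy selects actions from individual observation-action histories, not from the state, so $V^{\pi,e}_t$ is not a function of $s$ alone and the backup as written is not well-defined. Closing this requires working on the induced Markov chain over (state, joint history) pairs and propagating scopes through the observation model as well as the transition DBN, so that each local value component depends only on the local state factors \emph{and} the local histories of the agents in its scope; this is precisely where the existential claim about a suitable "factorization scheme" does its work, and it is the part your sketch does not carry out. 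A secondary loose end: assigning a component $e$ "to the agent(s) whose action variables appear in its scope" does not by itself yield an agent-wise decomposition when a scope contains several agents --- you must either split or designate a single owner to avoid double counting, and you should state that the resulting per-agent scopes may grow (in the worst case to the whole team) as the induction proceeds backward, which is why the theorem only asserts existence of a decomposition rather than locality of its terms. As it stands your text is a correct outline of the known argument with the hardest step acknowledged but unresolved, not a complete proof.
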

            $R_{t} = \sum_{i=1}^{|\tilde{N}|} \phi_{i}(R_{t})$ to derive the Efficiency axiom is also referred to as additively factored immediate reward function. Theorem~\ref{thm:factored-dec-pomdp} indicates that the Efficiency axiom such that $V(\tilde{N}, s_{t}) = \sum_{i=1}^{|\tilde{N}|} \phi_{i}(V(\tilde{N}, s_{t}))$ always exists in the context of Dec-POMDP~\cite{bernstein2002complexity}, given a factorization scheme of the transition function. However, we may optionally add one additional condition to implicitly facilitate searching for a factorization scheme during training such that
            \begin{equation}
                \sum_{i=1}^{|\tilde{N}|} \phi_{i}(V(\tilde{N}, s_{t})) = V(\tilde{N}, s_{t}).
            \label{eq:efficiency}
            \end{equation}
        
        \subsubsection{Symmetry Axiom}
            In practice, an agent is only able to receive an observation, in theory generated from a function of the state in Dec-POMDPs, a partially observable Markov game. This leads to that an agent is required to infer the state, as an individual hidden state, by the history of observations, e.g., using recurrent neural networks (RNNs), e.g. GRUs~\cite{chung2014empirical}. Owing to the feature of NAHT, each agent has to infer other agents' characteristics, which are used as inputs in addition to the individual hidden state, for both individual value functions and policies. This has been accomplished by the previous work, POAM~\citep{wang2025n}, a variant of IPPO~\citep{de2020independent} with an encoder-decoder module to infer other agents' characteristics, as a joint embedding vector. The value functions (critics) with the concatenation of individual hidden state and the joint embedding. Recall that the Symmetry axiom states that two agents who contribute equally to all coalitions excluding themselves should receive equal payoff allocations~\cite{chalkiadakis2011computational}[Chap. 2]. In the context of NAHT, this axiom applies when agents $i$ and $j$ share identical agent types. Since their types are the same, the two agents are exchangeable—from the perspective of any third agent, $i$ and $j$ are indistinguishable when included as teammates. Similarly, from each agent’s own perspective, the other agent appears identical within the teammate set. As a result, when interacting with the same teammates, agents $i$ and $j$ will exhibit identical behavioral trajectories, leading to identical inferred individual hidden states. Furthermore, since the joint embedding vector of teammates is computed based on their behavioral trajectories, and the teammates are fixed while $i$ and $j$ behave identically, both agents will infer the same joint teammate representation. Consequently, their expected payoff allocations, $V_i$ and $V_j$, are equal, thereby satisfying the Symmetry axiom.
            % The value functions (critics) with the concatenation of individual hidden state and the joint embedding, are invariant to the permutation of agents. Recall that $V_{i}(s_{t}) := \phi_{i}(V(\tilde{N}, s_{t}))$. The invariance of $V_{i}(s_{t})$ to permutation of agents means that $V(\tilde{N}, s_{t})$ is also invariant to permutation of agents (since $\phi$ is a linear transformation), which is known as a sufficient condition to satisfy the Symmetry axiom~\citep{dubey1975uniqueness}. 
            
    \subsection{Discussion on Proposed Algorithms and POAM}
    \label{subsec:summary-of-proposed-algorithms}
        To summarize, POAM uses TD($\lambda$) as the target values to update critics, and the global reward $R_{t}$ for each agent's policy optimization. Our algorithm named Shapley Machine is implemented based on POAM, by simply replacing the global reward $R_{t}$ with each agent's $R_{t,i}$ defined in Eq.~\eqref{eq:shape-rewards}, using TTD($\lambda$) with the horizon proportional to the number of possible agents, and adding Eq.~\eqref{eq:efficiency} as a regularization term.
        \begin{proposition}
        \label{prop:shapley-machine}
            Shapley Machine is an algorithm that fulfills Efficiency, Additivity and Symmetry, so it learns $V_{i}$ as Shapley values for dynamic scenarios.
        \end{proposition}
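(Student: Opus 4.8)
The plan is to prove Proposition~\ref{prop:shapley-machine} as a verification argument rather than a direct derivation. Shapley Machine implements, on each subspace $G(s)$, a multidimensional linear operator $\phi$ with $\phi_{i}(V(\tilde{N}, s_{t})) = V_{i}(s_{t})$; I will check that, by construction, this operator satisfies \emph{Additivity} (Linearity), \emph{Efficiency}, and \emph{Symmetry}. Theorem~\ref{thm:shapley-axioms} (Dubey's uniqueness result) then forces $\phi$ to be the Shapley value on $G(s)$, hence the learned $V_{i}$ are Shapley values; since the check is carried out state-wise for every $s \in \mathcal{S}$, the conclusion transfers to $G_{\texttt{NAHT}} = \oplus_{s \in \mathcal{S}} G(s)$, i.e. to dynamic scenarios. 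So the body of the proof is three independent axiom checks plus one invocation of Theorem~\ref{thm:shapley-axioms}.

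For \textbf{Additivity}, I would start from the linearly-expanded representation Eq.~\eqref{eq:shapley-value-linear} obtained via Lemma~\ref{lemm:subgame-transform}, apply Assumption~\ref{assm:correspondence-subgame-n-step-return} to identify each basis-game value with an $n$-step return, and arrive at Eq.~\eqref{eq:shapley-value-expected-game}, which expresses $\phi(V(\tilde{N}, s_{t}))$ as a $\lambda$-weighted mixture of $\phi$ applied to the $n$-step returns. I would then argue that the TTD($\lambda$) update whose TD error is Eq.~\eqref{eq:td-error-general} has precisely this identity as its fixed point, so the learned $V_{i}$ realizes the linearly-expanded form term by term, which is exactly what Additivity (and its generalization to Linearity on the vector space $G(s)$) requires. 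Proposition~\ref{prop:positive-Kc} supplies $k_{C} \geq 0$, so the geometric-distribution weights $k'_{C_{n}} = P_{\lambda}(n) > 0$ with $\sum_{n} k'_{C_{n}} = 1$ are a legitimate instantiation that keeps $V(\tilde{N}, s)$ in the superadditive class and makes the mixture a genuine convex combination.

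For \textbf{Efficiency}, I would show that $\sum_{i} V_{i}(s_{t}) = V(\tilde{N}, s_{t})$ at convergence by combining three ingredients: (i) the reward shaping Eq.~\eqref{eq:shape-rewards}, which by Proposition~\ref{prop:shapley-machine-reward} is exactly the solution of the constraint $\sum_{i} \phi_{i}(R_{t}) = R_{t}$ under one-step Bellman consistency for each $V_{i}$, so that summing the per-agent TD errors $\sum_{i} \delta_{t,i}$ collapses to the centralized TTD($\lambda$) error for $V(\tilde{N}, \cdot)$ precisely on the affine set $\{\sum_{i} V_{i} = V(\tilde{N}, \cdot)\}$ — i.e. the efficiency constraint is a self-consistent fixed point; (ii) Theorem~\ref{thm:factored-dec-pomdp}, which guarantees that with the additively factored reward $R_{t} = \sum_{i} \phi_{i}(R_{t})$ there is a factorization of the transition DBN making the finite-horizon value decomposable across agents, so such a fixed point exists; and (iii) the regularizer Eq.~\eqref{eq:efficiency}, which drives training toward that factorization. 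For \textbf{Symmetry}, I would formalize the architectural argument already sketched in Section~\ref{subsec:implementation-in-marl}: $V_{i}$ is a shared function of agent $i$'s history-inferred hidden state and a permutation-invariant joint teammate embedding; if $i$ and $j$ share an agent type then, facing a fixed teammate set, they generate identical trajectories and hence identical hidden states, and each is interchangeable in the other's teammate set, so the embeddings coincide and $V_{i} = V_{j}$ — which is the Symmetry axiom since same-type agents are exactly those contributing equally to every coalition excluding themselves. With all three axioms verified, Theorem~\ref{thm:shapley-axioms} closes the proof.

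I expect the main obstacle to be the Efficiency step. Theorem~\ref{thm:factored-dec-pomdp} only asserts the \emph{existence} of a transition-DBN factorization under which the value decomposes, whereas Shapley Machine never constructs this factorization explicitly; the regularizer Eq.~\eqref{eq:efficiency} only "implicitly facilitates" it. Making "$\sum_{i} V_{i} = V(\tilde{N}, \cdot)$ at convergence" fully rigorous therefore either requires the hard-constrained variant that enforces Eq.~\eqref{eq:efficiency} exactly, or must be stated as holding at fixed points that lie on that constraint set; I plan to present the result under this qualification. A secondary subtlety is that the Symmetry argument relies on the permutation-invariance of the teammate encoder, which should be stated as part of the formal specification of Shapley Machine rather than treated as incidental.
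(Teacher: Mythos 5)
Your proposal is correct and follows essentially the same route as the paper: the paper's own proof simply asserts that Shapley Machine enforces Efficiency, Additivity and Symmetry in implementation, notes that each $G(s)$ is isomorphic to $G$, and invokes Theorem~\ref{thm:shapley-axioms} to conclude that the learned $V_{i}$ are Shapley values. Your version merely spells out the axiom checks (and honestly flags the soft-constraint nature of the Efficiency regularizer) that the paper leaves implicit in Sections~\ref{subsec:links-to-rl} and~\ref{subsec:implementation-in-marl}.
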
 
        
        Note that Shapley Machine is forced to comply with Efficiency, Additivity and Symmetry during learning, so the resulting $V_{i}$ has to be Shapley value, as highlighted in Proposition~\ref{prop:shapley-machine}.
        \begin{remark}
            Banzhaf Machine is an algorithm which generates $V_{i}$, fulfilling Additivity and Symmetry.
        \end{remark}
        Following the principle of designing algorithms by complying with axiomatic characterization, we also propose an algorithm that satisfies conditions corresponding to Additivity and Symmetry. As the resulting $V_{i}$ is likely to be Banzhaf index~\cite{banzhaf1964weighted},~\footnote{It is known that Banzhaf index is not the only payoff allocation scheme that satisfies Additivity and Symmetry.} the algorithm is referred to as Banzhaf Machine.
        \begin{remark}
            POAM is a Banzhaf Machine that learns Banzhaf indices for dynamic scenarios. 
        \end{remark}
        According to the taxonomy of our theory, POAM also satisfies Additivity and Symmetry, so it is a Banzhaf Machine. However, it uses TD($\lambda$) rather than TTD($\lambda$), which means that it may overparameterize the value function in some situations where there are few agents \textcolor{black}{(justified by the MPE case in Figure~\ref{fig:shapley-machine-m})}. 
        % However, it is defined based on TD($\lambda$) rather than RTTD($\lambda$).

% We may write a section of discussion to show up the broader impact of this work in multi-agent learning.
\section{Experiments}
\label{sec:experiments}
    In experiments, we evaluate our proposed algorithm on MPE and SMAC tasks designed by~\cite{wang2025n}, compared with the only baseline algorithm POAM for NAHT. Also, we demonstrate the importance of basis games, a core concept in our theory, as an aid to setting the number of components considered in TTD($\lambda$). All experimental results are run with 5 random seeds and demonstrated in means with 95\% confidence intervals. The details of baselines, the implementation of our algorithm and further experimental settings are given in \textcolor{black}{Appendix~\ref{sec:experimental-details}}.
    % In experiments, we intend to answer the following research questions:
    % \begin{enumerate}
    %     \item Does Shapley Machine possess a superior test performance to POAM across all testbeds? (Usually)
    %     \item Does Shapley Machine enjoy faster convergence than POAM during learning? (Yes)
    %     \item Does the number of basis games influence the performance of Shapley Machine and Banzhaf Machine?
    %     \item Does the basis game value $k_{C}$ influence the performance of Shapley Machine?
    % \end{enumerate}
    \begin{figure}[ht!]
        \centering
        \includegraphics[width=\linewidth]{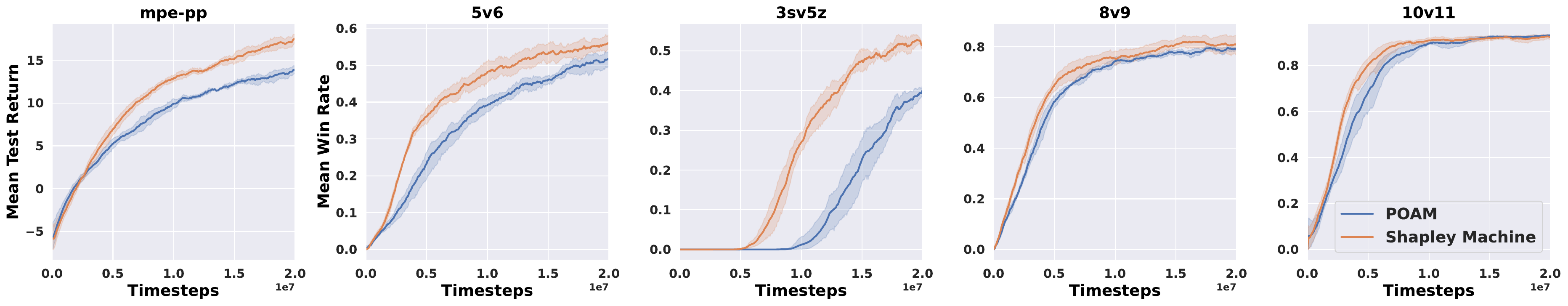}
        \caption{Testing evaluation during the training procedure.}
    \label{fig:test-return}
    \end{figure}
    \begin{figure}[ht!]
        \centering
        \includegraphics[width=\linewidth]{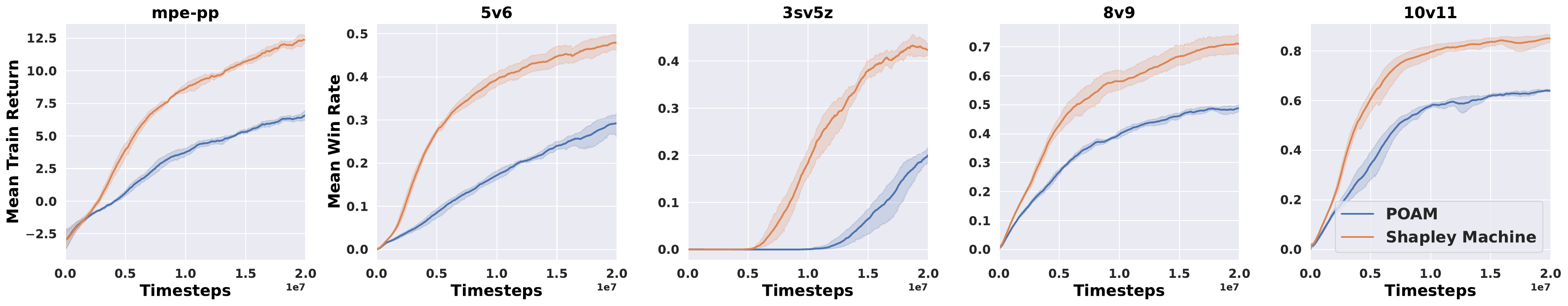}
        \caption{Training evaluation during the training procedure.}
    \label{fig:train-return}
    \end{figure}
    % \begin{figure}[ht!]
    %     \centering
    %     \includegraphics[width=\linewidth]{figures/main_entropy.png}
    %     \caption{Entropy results}
    % \label{fig:train-entropy}
    % \end{figure}
    \begin{figure}[ht!]
        \centering
        \includegraphics[width=\linewidth]{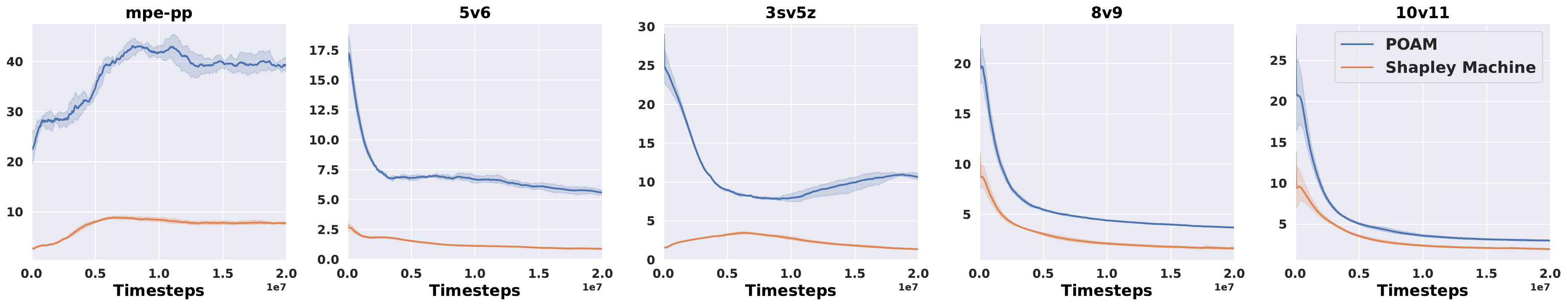}
        \caption{Critic loss record across all scenarios during the training procedure.}
    \label{fig:train-critic-loss}
    \end{figure}
    
    \subsection{Main Results}
    \label{subsec:main-result}
        We first show the general performance of Shapley Machine, compared with POAM. It can be observed in Figure~\ref{fig:test-return} that in three scenarios (MPE, 3sv5z and 5v6), Shapley Machine outperforms POAM in testing. In contrast, Shapley Machine can only match the performance of POAM in both 8v9 and 10v11 during testing. On the other hand, Figure~\ref{fig:train-return} shows that Shapley Machine converges faster during training. A possible reason is that the structure imposed by cooperative games facilitates the predicting of each agent's payoff. This can also be verified by Figure~\ref{fig:train-critic-loss}, in which we observe that the critic loss of Shapley Machine is consistently lower than POAM. Similar phenomenon can be observed for the entropy of policies \textcolor{black}{(see Appendix~\ref{subsec:further-evidence-on-the-benefit-of-structured-algorithm})}. This implies that agents trained with Shapley Machine receive more accurate credit during learning, which justifies the correctness of our theory and the effectiveness of Shapley Machine.

    \begin{figure}[htbp]
      \centering
      % First row of subfigures
      \begin{subfigure}[b]{0.45\textwidth}
        \centering
        \includegraphics[width=\linewidth]{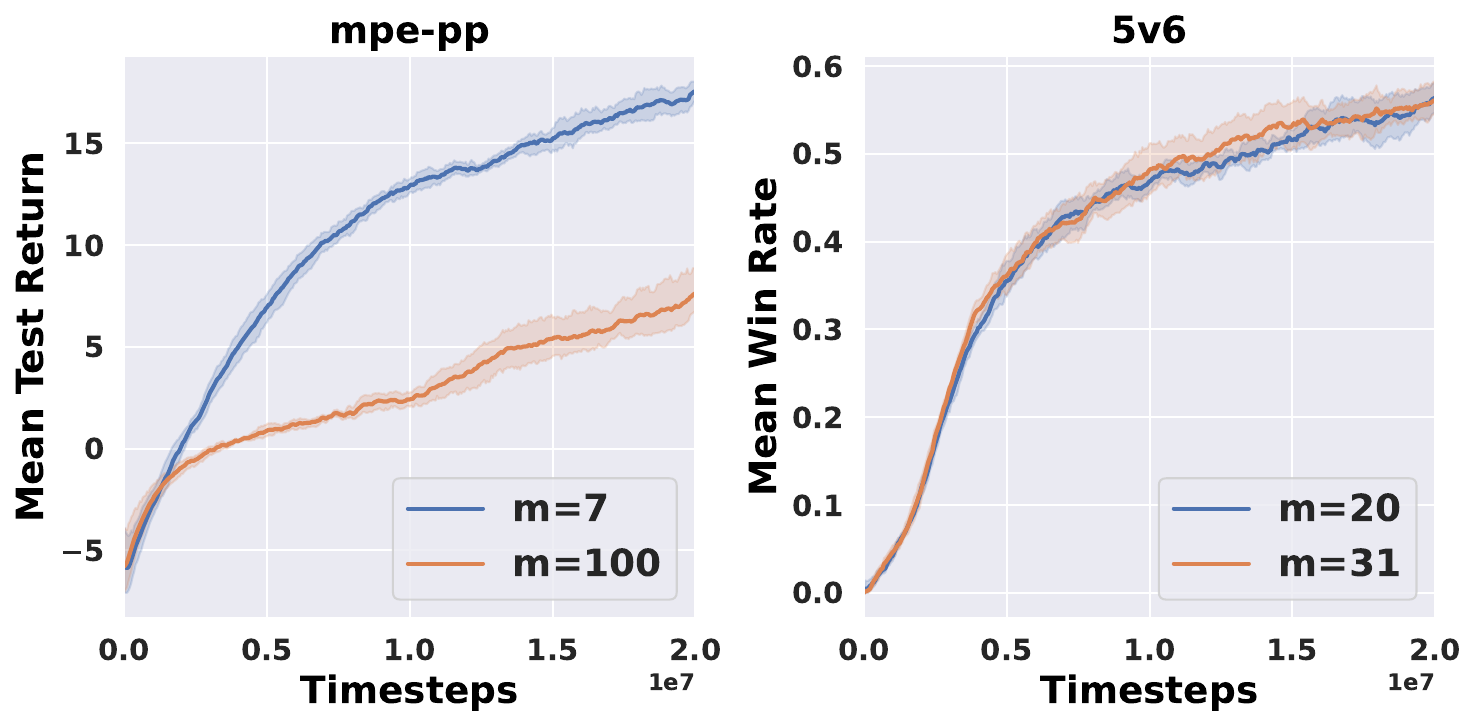}
        \caption{Shapley Machine with various $m$ values.}
        \label{fig:shapley-machine-m}
      \end{subfigure}
      \hfill
      \begin{subfigure}[b]{0.45\textwidth}
        \centering
        \includegraphics[width=\linewidth]{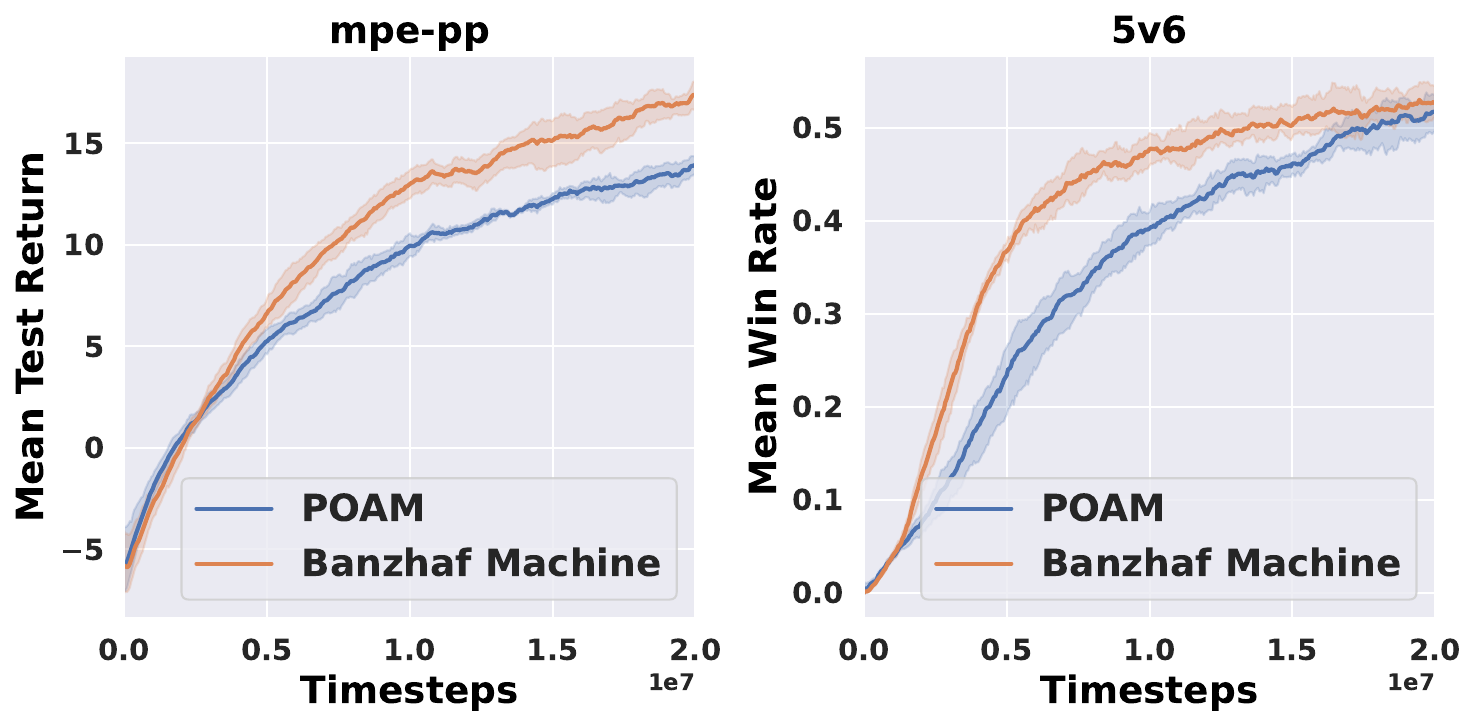}
        \caption{POAM vs. Banzhaf Machine.}
        \label{fig:poam-vs-banzhaf-machine}
      \end{subfigure}
    \caption{Proof-of-concept experiments for verifying the necessity of basis games. In MPE, the expected number of basis games $m$ is 7, while it is 31 in 5v6. Without the structure of our theory, there is no evidence to decide the value of $m$. Instead, TD($\lambda$) can only be used to calculate n-step returns up to the episode length. The episode length of MPE is 100.}
    \label{fig:basis-games}
    \end{figure}
    % the influence of the number of contributed basis games\footnote{The contributed basis games we mean here are the basis games with non-zero weightings. For instance, the case where $\lambda=0.85$ and $m=100$ is with the same contributed basis games as the case where $\lambda=0.9$ and $m=31$.}
    \subsection{Proof of Concepts in Our Theory}
        \textbf{Horizon $m$ of TTD($\lambda$) Equal to the Number of Basis Games.} We now investigate whether the horizon of TTD($\lambda$), denoted by $m$, is associated with the number of basis games according to our theory. In Figure~\ref{fig:shapley-machine-m}, we can observe that the performance of Shapley Machine varies among various $m$. Recall that in MPE the total number of agents is 3, so in our theory the optimal value of $m$ is 7,~\footnote{Recall that the number of basis games is equal to the number of non-empty coalitions of an agent set.} which is consistent with the result. To eliminate the possibility that this result is correlated with the Efficiency axiom of Shapley Machine, we also conduct an experiment on comparing POAM~\footnote{POAM uses TD($\lambda$), equivalently, $m$ is equal to the episode length.} and its variant with the optimal $m$ as 7, referred to as Banzhaf Machine, in Figure~\ref{fig:poam-vs-banzhaf-machine}. It can be observed that the optimal $m$ is still the number of basis games, conforming to our theory. This result reveals a general empirical finding that \textbf{TTD($\lambda$) tends to perform better than TD($\lambda$) in NAHT, with the horizon estimated by the number of agents}.
        
        \textbf{Approximating $m$ for Large-Scale Scenarios.} In scenarios with a large number of agents, it is almost impossible to consider all basis games (the number of which could be much larger than the episode length). To explore if the number of a subset of basis games can work for these scenarios, we further conduct case studies on 5v6. As shown in Figure~\ref{fig:shapley-machine-m}, using $m=20$ to approximate the theoretical number of basis games given by $m=31$ will not reduce the performance. In additional experiments (\textcolor{black}{see Appendix~\ref{subsec:empirical-m-values-and-number-of-agents}}), we observe that the empirical optimal value of $m$ is still positively correlated to the number of agents, this providing a law to seek $m$ manually for the large-scale scenarios, such as 8v9 and 10v11. Nevertheless, establishing an empirical function that maps the number of agents to $m$ will be addressed in future work. In this paper, we temporarily use the episode lengths as empirical values for $m$ in the 8v9 and 10v11 scenarios, which coincidentally correspond to the ascending order of their coalition sizes. This observation naturally raises an open question: \textbf{Is the episode length of a task related to the maximum number of agents involved?} Furthermore, it is empirically acknowledged that episode length often correlates with task difficulty and can facilitate more effective exploration. From this perspective, a second question arises: \textbf{Is the maximum number of agents indicative of task difficulty?} Both questions warrant deeper investigation in future work.

\section{Related Work}
\label{sec:related-work}
    \subsection{TD($\lambda$) in Reinforcement Learning and Multi-Agent Reinforcement Learning}
        Related work on temporal-difference learning with eligibility traces centers on the seminal TD($\lambda$) algorithm proposed by \cite{sutton1988learning}, which unified one-step TD and Monte Carlo methods via a trace-decay parameter $\lambda$ and introduced the forward‐ and backward‐view equivalence for efficient multi-step credit assignment. Subsequent theoretical analyses \cite{dayan1992convergence,dayan1994td,tsitsiklis1996analysis} established convergence guarantees and characterized the bias–variance trade-off inherent in $\lambda$-returns, while extensions by \cite{watkins1989learning,peng1994incremental} adapted eligibility traces for off-policy control, Q($\lambda$). Advances in function approximation, including true online TD($\lambda$) \cite{seijen2014true} and gradient-TD methods, further broadened applicability to large-scale and nonlinear settings, inspiring n-step return techniques such as generalized advantage estimation (GAE) \cite{schulman2015high} for policy optimization. In multi-agent reinforcement learning, algorithms of PPO family such as MAPPO \cite{yu2022surprising} and IPPO \cite{de2020independent}, applied GAE to optimize policies and thus used TD($\lambda$) as the target values to train critics. This paper sheds light on the link between truncated TD($\lambda$), shortened as TTD($\lambda$) in NAHT (a generalization of MARL) and Shapley value in theory. Furthermore, it offers the insight that the number of n-step return components in TTD($\lambda$) should be larger as the number of agents increases. This theoretically insightful conclusion has the potential to be applied to a broader range of multi-agent reinforcement learning tasks.

    \subsection{Theoretical Models for Ad Hoc Teamwork}
        We now discuss theoretical models describing ad hoc teamwork (AHT). \cite{brafman1996partially} pioneered research into AHT by investigating repeated matrix games involving a single teammate. Subsequent studies expanded this framework to scenarios with multiple teammates, notably by \cite{agmon2012leading}. Later, \cite{agmon2014modeling} further relaxed earlier assumptions by allowing teammates' policies to be selected from a known set. Stone et al. \cite{stone2010teach} initially formalized AHT through collaborative multi-armed bandits, albeit under notable assumptions such as prior knowledge of teammates' policies and environmental conditions. \cite{albrecht2013game} advanced this field significantly by introducing the stochastic Bayesian game (SBG), the first comprehensive theoretical framework accommodating dynamic environments and unknown teammate behaviors in AHT. Building on SBG, \cite{rahman2021towards} proposed the open stochastic Bayesian game (OSBG), addressing open ad hoc teamwork (OAHT). \cite{ZintgrafDCWH21} modeled AHT through interactive Bayesian reinforcement learning (IBRL) within Markov games, specifically targeting non-stationary teammate policies within single episodes. \cite{xie2021learning} introduced the hidden parameter Markov decision process (HiP-MDP) to handle situations where teammates' policies vary across episodes but remain stationary during individual episodes. Most recently, \cite{wang2024open} extended OSBG by incorporating principles from cooperative game theory, introducing the open stochastic Bayesian coalitional affinity game (OSBG-CAG), which theoretically justifies a graph-based representation for joint Q-value functions and includes rigorous convergence proofs for Q-learning algorithms in open team scenarios. This paper stands on the view of cooperative game theory, establishing a state-space cooperative game model to deal with the varying number of controlled agents and uncontrolled agents. Furthermore, this model serves as the foundation for the Shapley value-based approach to NAHT.
    
    \subsection{Shapley Value in Multi-Agent Reinforcement Learning}
        Related work mainly focused on developing the theory of Shapley value in MARL, and incorporated Shapley value~\cite{shapley1953value} into MARL algorithms. Early studies \cite{wang2020shapley} incorporated Shapley value into credit assignment scheme in a principled way and proposed an algorithm named SQDDPG, underpinned by the equivalence between cooperative-game theoretical models in dynamic scenarios and the shared reward Markov games. \cite{wang2022shaq} further improved the theory by proving the existence of Shapley value, and proposed an algorithm named SHAQ, an algorithm with promising convergence to find an optimal joint policy. It also shed light on the relationship between Shapley value and the relevant value decomposition and credit assignment approaches for MARL. \cite{li2021shapley} improved the stability of SQDDPG and proposed an algorithm named Shapley Counterfactual Credits. \cite{han2021multiagent} incorporated Shapley value into the model-based PPO as each agent's advantage value, and estimated coalition values by trained transition and reward models. \cite{chai2024aligning} continued the idea from \cite{han2021multiagent}, by replacing the the transition and reward models with more powerful world models. \cite{xue2022efficient} incorporated Shapley value into multi-agent communication, as a criterion for forming communication between agents. This paper primarily focuses on learning the Shapley value to address NAHT, a generalized paradigm of MARL characterized by an uncertain number of controlled and uncontrolled agents. Furthermore, the Shapley value is learned by satisfying its axioms, rather than by constructing an explicit form as done in previous work. This may help overcome the notorious instability issues in learning Shapley value-based algorithms.
        
\section{Conclusion}
\label{sec:conclusion}
    \textbf{Summary.} This paper addresses NAHT through the lens of cooperative game theory. Specifically, it first establishes a state-space cooperative game model, in the form of a vector space that describes NAHT processes, extended from the cooperative game space from cooperative game theory. Relying on the axioms defined on the state-space cooperative game model (Additivity, Efficiency and Symmetry), we derive an algorithm named Shapley Machine that is provable to uniquely learn Shapley value. Our approach is different from the traditional usage of Shapley value in MARL, in which value functions are usually constructed by Shapley value's explicit form. Rather, we show that Shapley values are subtly related to the TD($\lambda$), a well-known algorithm in RL dealing with bias-variance tradeoff in TD prediction.

    \textbf{Discussion, Limitation and Future Work.} In addition to proposing a new algorithm, a key result of this work is to shed light on designing multi-agent learning algorithms through axiomatic characterization. This not only helps to categorize existing algorithms (e.g., POAM can be categorized as an algorithm using Banzhaf index as a credit assignment scheme), but aids in designing new algorithms on the basis of clearly interpretable formal principles (e.g., Shapley Machine is designed by fulfilling three axioms). This will decrease the possibility of reinvention of algorithms by clarifying the relationship between the newly designed algorithm and previous ones. This principle can be extended to the future NAHT and MARL algorithm design, accelerating the pace of research into trustworthy AI. On the other hand, this paper unveils the potential to investigate the TTD($\lambda$) technique, though the hyperparameters like $\lambda$ and horizon $m$ can only be manually tuned. This is one of limitations of this work. In future work, the relation between $\lambda$ and task categories, as well as the relation between $m$ and the number of agents, can be further investigated. One possible solution is adaptively learning these hyperparameters through real-time feedback during training. Moreover, in our theory the weightings in TTD($\lambda$) (i.e. $k_{C}'$ in the context of our theory) can be negative (see Figure~\ref{fig:state-space-cooperative-game-case}), which implies that the class of games is broader than superadditive games from the perspective of cooperative game theory. This opens a new route towards designing weightings in a richer function class, rather than merely assuming a geometric distribution following the convention of RL. Finally, it has been shown in experiments that Shapley Machine does not work well for large-scale environments, which we will explore in the future work, especially establishing an empirical function that maps the number of agents to $m$.

\section*{Acknowledgement}
    This project is supported by the Engineering and Physical Sciences Research Council [Grant Ref: EP/Y028732/1]. The computational resources are supported by CSC -- IT Center for Science LTD., Finland.

%%%%%%%%%%%%%%%%%%%%%%%%%%%%%%%%%%%%%%%%%%%%%%%%%%%%%%%%%%%%
% \newpage
\bibliography{main}
\bibliographystyle{unsrt}

\appendix
% \section{Technical Appendices and Supplementary Material}
% Technical appendices with additional results, figures, graphs and proofs may be submitted with the paper submission before the full submission deadline (see above), or as a separate PDF in the ZIP file below before the supplementary material deadline. There is no page limit for the technical appendices.
% \newpage
        
\section{Additional Background}
\label{sec:additional-background}
    \subsection{$\lambda$-Return and TD($\lambda$)}
    \label{subsec:lambda-return}
        We now introduce an extension of return named $\lambda$-return. Mathematically, a $\lambda$-return $G_{t}^{\lambda}$ for infinite-horizon cases can be expressed as follows:
        \begin{equation}
            \begin{split}
                G_{t:t+n} = R_{t} + \gamma R_{t+1} + \cdots + \gamma^{n-1} R_{t+n} + \gamma^{n} V(s_{t+n}),\\
                G_{t}^{\lambda} = (1 - \lambda) \sum_{n=0}^{\infty} \lambda^{n} G_{t:t+n}.
            \end{split} 
        \end{equation}
        Similarly, the $\lambda$-return for finite-horizon cases can be expressed as follows:
        \begin{equation}
            G_{t}^{\lambda} = (1 - \lambda) \sum_{n=1}^{T-t} \lambda^{n} G_{t:t+n} + \lambda^{T-t} G_{t:T}.
        \end{equation}
        Note that if $\lambda=1$, updating according to the $\lambda$-return is a Monte Carlo algorithm. In contrast, if $\lambda=0$, then the $\lambda$-return reduces to $G_{t:t+1}$, the one-step return. Monte Carlo algorithm is known as its high variance but low bias, while one-step return is known as its low variance but high bias. For this reason, the change of $\lambda$ can be seen as a tradeoff between bias and variance. The TD prediction using $\lambda$-return as the target value is referred to as TD($\lambda$)~\cite{sutton1988learning}.

        A variant of TD($\lambda$) by shaping $G_{t}^{\lambda}$ with a designated horizon $h$ is referred to as truncated TD($\lambda$), shortened as TTD($\lambda$)~\cite{cichosz1994truncating}. Mathematically, $G_{t}^{\lambda}$ with a designated horizon $h$ can be formulated as follows:
        \begin{equation}
            G_{t}^{\lambda} = (1 - \lambda) \sum_{n=0}^{h-t} \lambda^{n} G_{t:t+n} + \lambda^{h-t} G_{t:h}.
        \end{equation}

        Note that in the convention of RL, the horizon $h$ is defined as the time difference from the starting point of an episode, denoted by $t=0$. In this paper, to simply the presentation of our work, we redefine the horizon as the time difference from each timestep $t$, denoted as $m$, such that $m=h-t$. With the new definition of horizon, we can rewrite the above formula of $G_{t}^{\lambda}$ as follows:
        \begin{equation}
            G_{t}^{\lambda} = (1 - \lambda) \sum_{n=1}^{m} \lambda^{n} G_{t:t+n} + \lambda^{m} G_{t:t+m}.
        \end{equation}
        
    \subsection{Superadditive Game and Cooperative Multi-Agent Reinforcement Learning}
    \label{subsec:superadditive-game-cmarl}
        Superadditive game is a subclass of characteristic games that satisfies the additional condition $v(C) + v(D) \geq v(C \cup D)$, for any two distinct coalitions $C,D \subseteq N$ with $C \cap D = \emptyset$. Previous work in MARL \cite{wang2020shapley} has shown the equivalence on objective functions between state-based superadditive games with action space and team reward Markov games with coalition structures.~\footnote{In \cite{wang2020shapley}, state-based convex games with action space were proved to be equivalent to team reward Markov games. However, setting $C \cap D = \emptyset$ can reduce a convex game to a superadditive game.} Intuitively, Shapley value is a solution to the stability of a team formation, resembling cooperation. Since NAHT is a generalization of MARL, this conclusion still valid here in principle. It is already proved that Shapley value exists in superadditive games \cite{shapley1953value} and each game instance in superadditive games can be uniquely represented as $\sum k_{C} v_{C}^{1}$, where $k_{C} \geq 0$~\citep{dubey1975uniqueness}. In cooperative game theory, any game can be transformed to a superadditive game if it is equipped with a superadditive cover such that $v^{*}(C) = \max_{CS(C)} \sum_{D \in CS(C)} v(D)$~\cite{greco2011complexity}. As a result, it will not lose generality if we consider the set of cooperative games as superadditive games in this paper.
        
    \subsection{Policy Optimization with Agent Modeling (POAM)}
    \label{subsec:poam}
        In general, POAM is an algorithm built upon IPPO~\citep{de2020independent}. The primary difference between POAM and IPPO is as follows: For each agent's policy and value, in addition to its observation used in IPPO, POAM takes as input an embedding vector to predict other agents' potential behaviors. More specifically, this embedding vector is trained by an encoder-decoder structure, wherein the encoder is modeled as RNNs and decoder is modeled as MLPs. For brevity, $-i$ denotes the set of all agents excluding agent $i$. Let $h_{t,i} =\{o_{k,i}, a_{k-1,i}\}_{k=1}^{t}$ denote agent $i$'s history of observations and actions up to timestep $t$ and $e_{t,i} \in \mathbb{R}^{n}$ denote the resulting embedding vector of $n$ dimensions. The encoder parameterized by $\theta^{e}$ is defined as $f_{\theta^{e}}^{enc}: \mathcal{H}_{i} \rightarrow \mathbb{R}^{n}$. The embedding vector is decoded by two decoder networks: the observation decoder $f_{\theta^{o}}^{dec}: \mathbb{R}^{n} \rightarrow \mathcal{O}_{-i}$, and the action decoder $f_{\theta^{a}}^{dec}: \mathbb{R}^{n} \rightarrow \Delta(\mathcal{A}_{-i})$. The decider networks are respectively trained to predict the observations and actions of all other agents on the team at timestep $t$, $o_{t,-i}$ and $a_{t,-i}$, to encourage $e_{t,i}$ to contain information about collective behaviors corresponding to $h_{t,i}$. While the observation decoder directly predicts the observed $-i$'s observations, the action decoder predicts the parameters of a probability distribution over the $-i$'s actions $\pi_{-i}(a_{t,-i}; f_{\theta^{a}}^{dec}(f_{\theta^{e}}^{enc}(h_{t,i})))$. As we consider continuous observations and discrete actions, the loss function with using Categorical distribution to model $\pi_{-i}$ is as follows:
        \begin{equation}
            L_{\theta^{e}, \theta^{o}, \theta^{a}}(h_{t,i}, o_{t,-i}, a_{t,-i}) = || f_{\theta^{o}}^{dec}(f_{\theta^{e}}^{enc}(h_{t,i})) - o_{t,-i} ||^{2} - \log \pi_{-i}(a_{t,-i}; f_{\theta^{a}}^{dec}(f_{\theta^{e}}^{enc}(h_{t,i}))).
        \label{eq:loss-encoder-decoder}
        \end{equation}
        
        IPPO employed generalized advantage estimation (GAE)~\citep{schulman2015high} to form the policy gradient, and the value function used to form GAE is trained by the following loss function:
        \begin{equation}
            L_{\theta^{c}}(h_{t,i}) = \frac{1}{2} \left( V_{i}^{\theta^{c}}(h_{t,i}, f_{\theta^{e}}^{enc}(h_{t,i})) - \hat{V}_{t,i} \right)^{2},
        \label{eq:loss-critic}
        \end{equation}
        where $\hat{V}_{t,i}$ is the finite-horizon TD($\lambda$) return, for which the horizon is up to the episode length $T$. Each agent $i$'s GAE estimation is as follows:
        \begin{equation}
            A_{t,i} = \sum_{l=0}^{T} (\gamma \lambda)^{l} \delta_{t+l,i},
        \end{equation}
        where $\delta_{t,i} = R_{t} + \gamma V_{i}^{\theta^{c}}(h_{t+1,i}, f_{\theta^{e}}^{enc}(h_{t+1,i})) - V_{i}^{\theta^{c}}(h_{t,i}, f_{\theta^{e}}^{enc}(h_{t,i}))$. Based on the GAE defined above, the policy optimization loss is defined as:
        \begin{equation}
            L_{\theta}(h_{t,i}, a_{t,i}) = \min \left\{ \frac{\pi_{\theta}(a_{t,i} | h_{t,i})}{\pi_{\theta_{old}}(a_{t,i} | h_{t,i})} A_{t,i}, \text{clip}\left( \frac{\pi_{\theta}(a_{t,i} | h_{t,i})}{\pi_{\theta_{old}}(a_{t,i} | h_{t,i})}, 1 - \epsilon, 1 + \epsilon \right) A_{t,i}\right\}.
        \label{eq:loss-policy}
        \end{equation}

\section{Justification of Assumptions}
\label{sec:explanation-of-assumptions}
    \begin{assumption}
    % \label{assm:correspondence-subgame-n-step-return}
         For a state $s_{t} \in \mathcal{S}$, the basis games ordered in an ascending sequence of the coalition sizes, $(v_{C_{1}, s_{t}}^{v_{s_{t}}(\tilde{N})}(\tilde{N}), v_{C_{2}, s_{t}}^{v_{s_{t}}(\tilde{N})}(\tilde{N}), ..., v_{C_{m}, s_{t}}^{v_{s_{t}}(\tilde{N})}(\tilde{N}))$, are associated with a sequence of $n$-step returns, $\left( \mathbb{E}_{\pi}[G_{t:t+1} | s_{t}], \mathbb{E}_{\pi}[G_{t:t+2} | s_{t}], ..., \mathbb{E}_{\pi}[G_{t:t+n} | s_{t}] \right)$, such that $\mathbb{E}_{\pi}[G_{t:t+k} | s_{t}] = \sum_{i=p}^{q} v_{C_{i}, s_{t}}^{v_{s_{t}}(\tilde{N})}(\tilde{N})$, for $1 \leq k \leq n$, where $|C_{p-1}| < |C_{p}| = |C_{p+1}| = \cdots = |C_{q}| < |C_{q+1}|$ and $n < m$.
    \end{assumption}
    Recall that $v_{C, s}^{v_{s}(\tilde{N})} = v_{s}(\tilde{N}) v_{C, s}^{1}$, and $v_{C, s}^{1}$ in a given state $s$ is with identical definition as $v_{C}^{1}$. We can express $v_{C, s}^{v_{s}(\tilde{N})}$ as follows:
    \begin{equation*}
        v_{C, s}^{v_{s}(\tilde{N})}(D) = 
        \begin{cases}
            v_{s}(\tilde{N}) & \text{if } C \subseteq D, \\
            0 & \text{otherwise}.
        \end{cases}
    \end{equation*}
    By Definition~\ref{def:game-set}, as the coalition size $|C|$ increases, it requires experiencing more agent compositions to get the $v_{C, s}^{v_{s}(\tilde{N})}(\tilde{N})$ activated and thus $k_{C} v_{C, s}^{v_{s}(\tilde{N})}(\tilde{N})$ can be evaluated. In the context of NAHT, it implies that as the coalition size $|C|$ increases, the more agent interaction samples are needed to get an accurate estimation. This property is aligned with that using n-step returns $G_{t:t+n}$ to estimate expected value would require more samples (higher variance) during TD prediction, as $n$ increases~\cite{sutton1998reinforcement}. For this reason, if considering a simple case such that $|C_{1}| < |C_{2}|$, we may have that $v_{C_{1}, s_{t}}^{v_{s_{t}}(\tilde{N})}(\tilde{N}) = \mathbb{E}_{\pi}[G_{t:t+1} | s_{t}]$ and $v_{C_{2}, s_{t}}^{v_{s_{t}}(\tilde{N})}(\tilde{N}) = \mathbb{E}_{\pi}[G_{t:t+2} | s_{t}]$.

    Now, we consider sorting all possible coalitions, the subsets of $\tilde{N}$, according to their sizes, in a sequence. In this sequence, we can always have coalitions with identical size, such that $|C_{p-1}| < |C_{p}| = |C_{p+1}| = \cdots = |C_{q}| < |C_{q+1}|$, where $1 < p < q < 2^{|\tilde{N}|} - 1$. Following the above statement about the relation between n-step returns and $v_{C, s}^{v_{s}(\tilde{N})}(\tilde{N})$, it is viable to sum the $v_{C, s}^{v_{s}(\tilde{N})}(\tilde{N})$ of coalitions with the same size. By modifying $v_{C_{i}, s}^{v_{s}(\tilde{N})}(\tilde{N}) = \mathbb{E}_{\pi}[G_{t:t+i}]$ above, we relate the summand to a n-step return, such that $\mathbb{E}_{\pi}[G_{t:t+k} | s_{t}] = \sum_{i=p}^{q} v_{C_{i}, s_{t}}^{v_{s_{t}}(\tilde{N})}(\tilde{N})$, where $k < n$. In this way, we can match $v_{C, s_{t}}^{v_{s_{t}}(\tilde{N})}(\tilde{N})$ of all possible coalitions $C \subseteq \tilde{N}$ with a sequence of expected n-step returns $\left( \mathbb{E}_{\pi}[G_{t:t+1} | s_{t}], \mathbb{E}_{\pi}[G_{t:t+2} | s_{t}], ..., \mathbb{E}_{\pi}[G_{t:t+n} | s_{t}] \right)$.
    % As a result, given that $\emptyset \neq C_{1} \subset C_{2} \subset ... \subset C_{n}$, it is reasonable to associate $v_{C, s}^{v_{s}(\tilde{N})}(\tilde{N})$ with $G_{t:t+n}$.
    
\section{Complete Mathematical Proofs}
\label{sec:complete-math-proofs}
    \begingroup
    \def\thelemma{\ref{lemm:subgame-transform}}
        \begin{lemma}
            Given a fixed state $s \in {\cal S}$ each subgame $v_{s} \in G(s)$, it can be uniquely represented by $v_{s} = \sum_{\emptyset \neq C \subseteq \tilde{N}} \frac{k_{C}}{v_{s}(\tilde{N})} \cdot v_{C,s}^{v_{s}(\tilde{N})}$.
        \end{lemma}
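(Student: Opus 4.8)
The plan is to reduce the claimed identity to the already-established unique basis representation of Definition~\ref{def:game-set} via a simple rescaling of coefficients. First I would recall that, for a fixed state $s$, the space $G(s)$ is isomorphic to $\mathbb{R}^{2^{|\tilde{N}|}-1}$ with basis $\{v_{C,s}^{1} \mid \emptyset \neq C \subseteq \tilde{N}\}$, so every subgame $v_{s} \in G(s)$ admits the unique expansion $v_{s} = \sum_{\emptyset \neq C \subseteq \tilde{N}} k_{C}\, v_{C,s}^{1}$ with $k_{C} = \sum_{T \subseteq C}(-1)^{|C|-|T|} v_{s}(T)$, as stated in Definition~\ref{def:game-set} and Section~\ref{subsec:shapley-value}.

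Next I would observe that the scalar $v_{s}(\tilde{N})$ is strictly positive — it is the value of the grand coalition of a (super)additive game and, in the Markov-game reading of Section~\ref{subsec:links-to-rl}, a state value assumed positive — so dividing by it is legitimate. Then, using the definition $v_{C,s}^{z} = z\, v_{C,s}^{1}$ for any $z \in \mathbb{R}$, taken at $z = v_{s}(\tilde{N})$, each summand rewrites as
\[
\frac{k_{C}}{v_{s}(\tilde{N})} \cdot v_{C,s}^{v_{s}(\tilde{N})} = \frac{k_{C}}{v_{s}(\tilde{N})} \cdot v_{s}(\tilde{N})\, v_{C,s}^{1} = k_{C}\, v_{C,s}^{1},
\]
and summing over $\emptyset \neq C \subseteq \tilde{N}$ recovers exactly $v_{s} = \sum_{\emptyset \neq C \subseteq \tilde{N}} k_{C}\, v_{C,s}^{1}$, which establishes existence of the claimed representation.

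For uniqueness, I would note that any representation $v_{s} = \sum_{\emptyset \neq C \subseteq \tilde{N}} \alpha_{C}\, v_{C,s}^{v_{s}(\tilde{N})}$ is, after the same substitution, a representation $v_{s} = \sum_{\emptyset \neq C \subseteq \tilde{N}} \bigl(v_{s}(\tilde{N})\,\alpha_{C}\bigr)\, v_{C,s}^{1}$ in the Dubey basis; by uniqueness of that expansion one must have $v_{s}(\tilde{N})\,\alpha_{C} = k_{C}$, hence $\alpha_{C} = k_{C}/v_{s}(\tilde{N})$ for every $C$. There is no genuine technical obstacle here: the statement is essentially a change of coordinates in $G(s)$, and the only point deserving care is confirming $v_{s}(\tilde{N}) \neq 0$ so that the coefficients $k_{C}/v_{s}(\tilde{N})$ are well-defined, which is inherited from the superadditive-game and positive-state-value assumptions invoked earlier.
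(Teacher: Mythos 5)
Your proof is correct and follows essentially the same route as the paper's: rewrite $k_{C}$ as $\frac{k_{C}v_{s}(\tilde{N})}{v_{s}(\tilde{N})}$ in the basis expansion of Definition~\ref{def:game-set} and absorb the factor $v_{s}(\tilde{N})$ into $v_{C,s}^{v_{s}(\tilde{N})} = v_{s}(\tilde{N})\,v_{C,s}^{1}$. You additionally spell out the uniqueness argument (via uniqueness of the $\{v_{C,s}^{1}\}$ expansion) and the requirement $v_{s}(\tilde{N}) \neq 0$, both of which the paper leaves implicit, so this is a slightly more careful write-up of the same idea.
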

        \begin{proof}
            We can represent each $k_{C}$ equivalently as $\frac{k_{C}v_{s}(\tilde{N})}{v_{s}(\tilde{N})}$. Substituting this term into the formula in Definition~\ref{def:game-set}, we can get the following formula:
            \begin{equation*}
                v_{s} = \sum_{\emptyset \neq C \subseteq \tilde{N}} \frac{k_{C}v_{s}(\tilde{N})}{v_{s}(\tilde{N})} \cdot v_{C, s}^{1}.
            \end{equation*}
            Since $v_{C, s}^{v_{s}(\tilde{N})} = v_{s}(\tilde{N}) v_{C, s}^{1}$ with $z = v_{s}(\tilde{N})$, we can have the following formula:
            \begin{equation*}
                v_{s} = \sum_{\emptyset \neq C \subseteq \tilde{N}} \frac{k_{C}}{v_{s}(\tilde{N})} \cdot v_{C, s}^{v_{s}(\tilde{N})}.
            \end{equation*}
        \end{proof}
    \endgroup
        
    \begingroup
    \def\theproposition{\ref{prop:shapley-machine-reward}}
        \begin{proposition}
            Given the condition that $\sum_{i=1}^{|\tilde{N}|} \phi_{i}(R_{t}) = R_{t}$, the payoff allocation defined on returns $R_{t}$, can be expressed as:
            \begin{equation*}
                \phi_{i}(R_{t}) := R_{t} - \sum_{j \neq i} \left( \ \phi_{j}(V(\tilde{N}, s_{t})) - \gamma \phi_{j}(V(\tilde{N}, s_{t+1})) \ \right).
            % \label{eq:individual-reward-approximation}
            \end{equation*}
        \end{proposition}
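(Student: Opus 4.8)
The plan is to obtain $\phi_i(R_t)$ by combining the Efficiency hypothesis with the one-step instance of the value expansion already established in Eq.~\eqref{eq:game-value-expansion}. First I would invoke the assumption $\sum_{i=1}^{|\tilde{N}|}\phi_i(R_t) = R_t$ and solve for the $i$-th component, writing $\phi_i(R_t) = R_t - \sum_{j\neq i}\phi_j(R_t)$. This reduces the task to expressing every other agent's reward share $\phi_j(R_t)$ purely in terms of the state-value shares $\phi_j(V(\tilde{N},\cdot))$, which is exactly what is needed so that $\phi(R_{t+\tau})$ in Eq.~\eqref{eq:impose-shapley-value} becomes well defined on $G(s)$.

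Second, I would specialize the coalition-value expansion Eq.~\eqref{eq:game-value-expansion-coalition} to the one-step horizon $k=1$, giving $V(\tilde{N}, s_t) = \mathbb{E}_\pi[R_t + \gamma V(\tilde{N}, s_{t+1}) \mid s_t]$, then apply the multidimensional linear operator $\phi$ and use its Linearity together with linearity of the expectation to obtain, component-wise, $\phi_j(V(\tilde{N}, s_t)) = \mathbb{E}_\pi[\phi_j(R_t) + \gamma\phi_j(V(\tilde{N}, s_{t+1})) \mid s_t]$. Reading this as the defining relation for the reward allocation on a sampled transition (the same move by which an expected $n$-step return is replaced by its TD target throughout Section~\ref{subsec:links-to-rl}), I identify $\phi_j(R_t) := \phi_j(V(\tilde{N}, s_t)) - \gamma\phi_j(V(\tilde{N}, s_{t+1}))$, i.e.\ each agent's reward share is its one-step temporal difference in value shares. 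Substituting this into the expression from the first step yields precisely $\phi_i(R_t) = R_t - \sum_{j\neq i}\bigl(\phi_j(V(\tilde{N}, s_t)) - \gamma\phi_j(V(\tilde{N}, s_{t+1}))\bigr)$.

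Finally, I would verify internal consistency with the Efficiency axiom on state values: writing $D_j = \phi_j(V(\tilde{N},s_t)) - \gamma\phi_j(V(\tilde{N},s_{t+1}))$ and summing the derived expression over $i$ gives $\sum_i \phi_i(R_t) = |\tilde{N}|R_t - (|\tilde{N}|-1)\sum_j D_j = |\tilde{N}|R_t - (|\tilde{N}|-1)\bigl(V(\tilde{N},s_t) - \gamma V(\tilde{N},s_{t+1})\bigr)$, where the last equality uses $\sum_j \phi_j(V(\tilde{N},\cdot)) = V(\tilde{N},\cdot)$; on a sampled transition $V(\tilde{N},s_t) - \gamma V(\tilde{N},s_{t+1}) = R_t$ by the one-step Bellman relation, so the sum collapses to $R_t$, confirming that the proposed $\phi_i(R_t)$ is compatible with the hypothesis rather than contradicting it.

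The main obstacle is the passage from the expectation identity $\phi_j(V(\tilde{N}, s_t)) = \mathbb{E}_\pi[\phi_j(R_t) + \gamma\phi_j(V(\tilde{N}, s_{t+1})) \mid s_t]$ to the pointwise identification of $\phi_j(R_t)$: strictly, the realized reward share differs from the realized value difference by a mean-zero TD residual, so the claimed equality is exact only in expectation (or under a deterministic transition). I would address this by stating explicitly that, consistent with the TD-prediction viewpoint adopted earlier in the paper, Eq.~\eqref{eq:individual-reward-approximation} is to be understood as the per-sample target whose expectation recovers the Efficiency-consistent allocation; the remaining steps are linear bookkeeping using the Linearity of $\phi$ and the Efficiency axiom.
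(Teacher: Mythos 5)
Your proof follows essentially the same route as the paper's: isolate $\phi_{i}(R_{t})$ from the Efficiency condition, identify each $\phi_{j}(R_{t})$ with the one-step difference $\phi_{j}(V(\tilde{N}, s_{t})) - \gamma \phi_{j}(V(\tilde{N}, s_{t+1}))$ via the per-agent value expansion, and substitute back. Your explicit treatment of the expectation-versus-per-sample identification (and the closing consistency check) is in fact more careful than the paper's own argument, which simply asserts $\phi_{i}(V(\tilde{N}, s_{t})) = \phi_{i}(R_{t}) + \gamma \phi_{i}(V(\tilde{N}, s_{t+1}))$ without the expectation.
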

        \begin{proof}
            Recall that $\phi(\cdot) \in \mathbb{R}^{|\tilde{N}|}$ is a multidimensional linear transformation, where $n$ is the number of agents. We now define $\phi(R_{t})$ by introducing Efficiency axiom. To satisfy the Efficiency axiom such that $\sum_{i=1}^{|\tilde{N}|} \phi_{i}(V(\tilde{N}, s_{t})) = V(\tilde{N}, s_{t})$, it is reasonable to assume that $\sum_{i=1}^{|\tilde{N}|} \phi_{i}(R_{t}) = R_{t}$. In other words, each agent's value expansion can be expressed independently with its own $\phi_{i}(R_{t})$, which can be justified by Theorem~\ref{thm:factored-dec-pomdp}. Next, we show how each $\phi_{i}(R_{t})$ is approximated by $\phi(V(\tilde{N}, s_{t}))$. It is not difficult to observe that for each agent $i \in \tilde{N}$, we have that $\phi_{i}(V(\tilde{N}, s_{t})) = \phi_{i}(R_{t}) + \gamma \phi_{i}(V(\tilde{N}, s_{t+1}))$. By the condition that $\sum_{i=1}^{|\tilde{N}|} \phi_{i}(R_{t}) = R_{t}$, we can derive that $\phi_{i}(R_{t}) = R_{t} - \sum_{j \neq i} \phi_{j}(R_{t})$. Since $\phi_{j}(R_{t}) = \phi_{j}(V(\tilde{N}, s_{t})) - \gamma \phi_{j}(V(\tilde{N}, s_{t+1}))$, we get the formula of $\phi_{i}(R_{t})$ that
            \begin{equation*}
                \phi_{i}(R_{t}) := R_{t} - \sum_{j \neq i} \left( \ \phi_{j}(V(\tilde{N}, s_{t})) - \gamma \phi_{j}(V(\tilde{N}, s_{t+1})) \ \right).
            % \label{eq:individual-reward-approximation}
            \end{equation*}
        \end{proof}
    \endgroup

    \begin{proposition}
        For the class of superadditive games formed by a set of basis games $\{v^1_{C,s} | \emptyset \neq C \subseteq \tilde{N} \}$, it holds that $k_{C} \geq 0$, for all $\emptyset \neq C \subseteq \tilde{N}$.
    \end{proposition}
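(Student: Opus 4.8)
The plan is to read the statement as the characterization that the superadditive members of $G(s)$ are exactly the games whose coordinates in the basis $\{v_{C,s}^{1}\}$ are all nonnegative, and to obtain the stated sign condition by combining two facts already available in the paper. First, by Definition~\ref{def:game-set} every $w\in G(s)$ has a \emph{unique} expansion $w=\sum_{\emptyset\neq C\subseteq\tilde N}k_C v_{C,s}^{1}$ with $k_C=\sum_{T\subseteq C}(-1)^{|C|-|T|}w(T)$. Second, the classical result recalled in Appendix~\ref{subsec:superadditive-game-cmarl} (after \cite{shapley1953value,dubey1975uniqueness}) states that a superadditive game can be written as a \emph{nonnegative} combination of the basis (unanimity) games. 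Since the expansion is unique, the nonnegative coefficients supplied by that result must coincide with the $k_C$ above, which yields $k_C\ge 0$ for every $\emptyset\neq C\subseteq\tilde N$.

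To keep the argument as self-contained as possible I would, in parallel, verify the converse implication directly, since it is the one actually invoked later (for instance just after Eq.~\eqref{eq:td-error-general}, where $k_{C_n}'>0$ is used to place $V(\tilde N,s)$ in the superadditive class). The key steps there are short case checks: (i) each $v_{C,s}^{1}$ is superadditive, because for disjoint $S,T$ a nonempty $C$ cannot be contained in both, so $v_{C,s}^{1}(S)+v_{C,s}^{1}(T)\le 1\le v_{C,s}^{1}(S\cup T)$ whenever $C\subseteq S\cup T$ and both sides vanish otherwise; (ii) the superadditive games form a convex cone, being closed under sums and nonnegative scaling, hence $\sum k_C v_{C,s}^{1}$ with all $k_C\ge 0$ is superadditive; and (iii) the base case $k_{\{i\}}=w(\{i\})\ge 0$ holds because $w$ is $\mathbb{R}_{\ge 0}$-valued.

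The step I expect to be the real obstacle is the forward implication exactly as stated --- superadditivity $\Rightarrow k_C\ge 0$ for \emph{all} $C$, not merely for singletons and pairs. A bare-hands induction on $|C|$ through the Möbius inversion formula stalls: splitting $C=A\sqcup B$ and subtracting the associated superadditivity inequality only controls the \emph{sum} of the dividends of the coalitions straddling the split, and the alternating signs $(-1)^{|C|-|T|}$ prevent the leftover slack terms from combining into a single inequality for $k_C$. For this reason I would route that direction through the structural characterization of \cite{dubey1975uniqueness} (equivalently, the superadditive-cover representation discussed in Appendix~\ref{subsec:superadditive-game-cmarl}) rather than attempt a direct combinatorial induction, and I would be careful to state precisely which subclass of $G(s)$ this identification is applied to.
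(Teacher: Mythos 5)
Your proposal takes essentially the same route as the paper's own proof: both handle the crucial implication (superadditivity $\Rightarrow k_{C}\ge 0$) by delegating it to the classical representation result attributed to \cite{shapley1953value,dubey1975uniqueness} (recalled in Appendix~\ref{subsec:superadditive-game-cmarl}), combined with the uniqueness of the expansion $w=\sum_{\emptyset\neq C\subseteq\tilde N}k_{C}v_{C,s}^{1}$, and then transfer the conclusion to $G(s)$ via its isomorphism with $G$. Your additional direct verification of the converse (that $k_{C}\ge 0$ makes the game superadditive, via superadditivity of each $v_{C,s}^{1}$ and closure of superadditive games under nonnegative combinations) is a sound supplement the paper omits, but it does not change the argument for the stated implication.
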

    \begin{proof}
        We first consider a characteristic function game space $G$ (a broader space of superadditive games). Each game $v: G \rightarrow \mathbb{R}_{\geq 0}$ belonging to $G$ can be uniquely represented as $$v = \sum_{\emptyset \neq C \subseteq \tilde{N}} k_{C} v_{C}^{1}.$$ The analytic form of $k_{C}$ under $G$ is represented as follows:
        \begin{equation*}
            k_{C} = \sum_{T \subseteq C} (-1)^{|C| - |T|} v(T),
        \end{equation*}
        where $v: G \rightarrow \mathbb{R}^{+}$ is a value function of a game belonging to $G$. The condition of superadditive games is as follows:
        \begin{equation*}
            v(T \cup C) \geq v(T) + v(C).
        \end{equation*}

        It has been proved that if the above condition holds, $k_{C} \geq 0$ has to be fulfilling~\cite{shapley1953value,dubey1975uniqueness}.
        
        We now extend the $G$ to a subgame space $G(s)$, for a fixed state $s \in \mathcal{S}$. The characteristics fulfilling in $G$ also holds in $G(s)$, since the subgame space $G(s)$ is isomorphic to a game space $G$. More detailed, every result above holds for $v_{s}: G(s) \rightarrow \mathbb{R}_{\geq 0}$, uniquely represented as $v_{s} = \sum_{\emptyset \neq C \subseteq \tilde{N}} k_{C} v_{C,s}^{1}$. 
        
        As we consider $G(s)$ as a space of superadditive games, $k_{C} \geq 0$ still holds.
    \end{proof}
    
    \begingroup
    \def\theproposition{\ref{prop:shapley-machine}}
        \begin{proposition}
            Shapley Machine is an algorithm that fulfills Efficiency, Additivity and Symmetry, so it learns $V_{i}$ as Shapley values for dynamic scenarios.
        \end{proposition}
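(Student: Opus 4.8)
The plan is to verify the three defining axioms of the Shapley value — Efficiency, Additivity (Linearity), and Symmetry — for the allocation map $V_i(\cdot) = \phi_i(V(\tilde N,\cdot))$ produced at the fixed point of Shapley Machine, and then invoke the uniqueness statement of Theorem~\ref{thm:shapley-axioms}. That characterization applies on $G_{\texttt{NAHT}} = \oplus_{s\in\mathcal S} G(s)$ because each $G(s)$, and hence the direct sum, is isomorphic to a real coordinate space, so Dubey's hypotheses transfer verbatim; the argument then reduces to checking that each component of Shapley Machine enforces one of the axioms.

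First I would handle Additivity. Recall from Lemma~\ref{lemm:subgame-transform} that every subgame admits the unique representation $v_s = \sum_{\emptyset\neq C\subseteq\tilde N} \frac{k_C}{v_s(\tilde N)}\, v_{C,s}^{v_s(\tilde N)}$, and that throughout Section~\ref{subsec:links-to-rl} the estimator $\phi$ is imposed as a multidimensional linear operator on $G(s)$. The TTD($\lambda$) target embedded in the TD error of Eq.~\eqref{eq:td-error-general} is precisely the value obtained by pushing $\phi$ through that linear combination, with the weights $k'_{C_n}$ instantiated as the geometric coefficients $((1-\lambda),(1-\lambda)\lambda,\dots,\lambda^{m})$ (and nonnegativity given by Proposition~\ref{prop:positive-Kc}). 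Hence at the fixed point $V_i$ is, by construction, the image of $v_s$ under a linear map of the basis games, i.e. $\phi(\sum_n \alpha_n w_n) = \sum_n \alpha_n \phi(w_n)$, which is Linearity and therefore Additivity.

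Next I would establish Efficiency by combining two ingredients. The reward-shaping rule Eq.~\eqref{eq:shape-rewards} is exactly the allocation of Proposition~\ref{prop:shapley-machine-reward} and enforces $\sum_i R_{t,i} = R_t$ — an additively factored immediate reward — so by Theorem~\ref{thm:factored-dec-pomdp} a value decomposition $V(\tilde N,s_t) = \sum_i \phi_i(V(\tilde N,s_t))$ exists under a suitable factorization of the transition dynamics. Summing the per-agent TD errors $\delta_{t,i}$ over $i$ and using $\sum_i R_{t,i} = R_t$ shows the aggregate target for $\sum_i V_i$ coincides with the TTD($\lambda$) target for $V(\tilde N,\cdot)$; together with the explicit regularizer Eq.~\eqref{eq:efficiency} added to the loss, the optimizer is driven to a fixed point at which $\sum_i V_i(s) = V(\tilde N,s)$ for all $s$. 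Finally, for Symmetry I would formalize the argument of Section~\ref{subsec:implementation-in-marl}: if agents $i$ and $j$ share a type, they are interchangeable within any teammate set and observe the same teammates, so since the individual hidden state and the teammate embedding are deterministic functions of the observation--action history, both agents feed identical inputs to the shared critic and $V_i \equiv V_j$ — exactly the Symmetry axiom in the NAHT instantiation. Assembling the three, the fixed-point allocation on $G_{\texttt{NAHT}}$ satisfies Efficiency, Additivity and Symmetry, so Theorem~\ref{thm:shapley-axioms} forces it to be the Shapley value.

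The hard part will be the Efficiency step: it is imposed only as a soft regularizer, so strictly one obtains the Shapley value in the idealized limit of exact optimization and under the learned factorization promised by Theorem~\ref{thm:factored-dec-pomdp}; making this rigorous requires arguing either that the regularizer is driven to zero at convergence or that the factorization renders the constraint automatically consistent with the TTD($\lambda$) fixed point. A secondary subtlety is that Additivity must hold as an identity of operators on the whole vector space $G_{\texttt{NAHT}}$ rather than at a single realized game, which hinges on $\phi$ having genuinely been introduced as a linear operator in the derivation and not as an instance-specific map; I would make this explicit before concluding via Dubey's uniqueness theorem.
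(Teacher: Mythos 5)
Your proposal is correct and takes essentially the same route as the paper: the paper's own proof is a two-line argument noting that Shapley Machine enforces Efficiency, Additivity and Symmetry by construction of its implementation, that each subspace $G(s)$ is isomorphic to $G$, and then invoking the uniqueness statement of Theorem~\ref{thm:shapley-axioms}. Your version simply spells out the per-axiom verification (TTD($\lambda$) linearity, the reward shaping plus efficiency regularizer with Theorem~\ref{thm:factored-dec-pomdp}, and type-based symmetry) and candidly flags the soft-regularizer caveat for Efficiency, details the paper leaves implicit.
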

        \begin{proof}
            Shapley Machine follows Efficiency, Additivity and Symmetry in implementation, as well as each subspace $G(s)$ is isomorphic to the original cooperative game space $G$. As per Theorem~\ref{thm:shapley-axioms}, we can get the conclusion.
        \end{proof}
    \endgroup
    
\section{Experimental Details}
\label{sec:experimental-details}
    \subsection{Implementation Details}
    \label{subsec:implementation-details}
        Our algorithm is built upon POAM and all loss functions for training the encoder-decoder model and related loss functions of PPO have been remained. Please refer to Appendix~\ref{subsec:poam} for details. For conciseness, we only list the novel loss functions proposed in this paper as below.
        
        \subsubsection{Shapley Machine}
            We now describe the details about the implementation of our proposed algorithm, referred to as Shapley Machine. In general, our algorithm is established based on POAM~\cite{wang2025n}, with modification to fulfill all the three axioms of Shapley value: Efficiency, Additivity and Symmetry. Since Symmetry has been implemented by incorporating embedding vectors to represent other agents' characteristics from each agent's view, we only need to fulfill Efficiency and Additivity. 
            
            \textbf{Implementing Additivity.} In general, POAM has implemented a version of Additivity using TD($\lambda$), which does not strictly conform to the principle of determining the number of n-step return components. Instead, we change TD($\lambda$) to TTD($\lambda$), for which the number of components is equal to the number of the non-empty coalitions in theory. Note that in some scenarios the episode length is smaller than the number of non-empty coalitions. In these cases, TTD($\lambda$) can be optionally reduced to the TD($\lambda$) for the finite-horizon tasks with the episode length as $T$, equivalently, the TTD($\lambda$) with $m=T$. Alternatively, we can select an empirical value of $m$ for each task. In this paper, we set $m=T$ for the scenarios 8v9 and 10v11.
        
            \textbf{Implementing Efficiency.} Recall that we have derived the condition for realizing the Efficiency axiom such that:
            \begin{equation*}
                R_{t,i} = R_{t} - \sum_{j \neq i} \left( V_{j}(s_{t}) - \gamma V_{j}(s_{t+1}) \right),
            \end{equation*}
            where $V_{j}(s_{t})$ indicates an agent $j$'s value estimation. During the practical training procedure the $V_{j}(s_{t})$ could be severely inaccurate in the beginning, which may result in the instability of learning. To mitigate this issue, we add an extra coefficient $\alpha \in (0, 1)$ to the term $\sum_{j \neq i} \left( V_{j}(s_{t}) - \gamma V_{j}(s_{t+1}) \right)$, such that:
            \begin{equation}
                R_{t,i} = R_{t} - \alpha \sum_{j \neq i} \left( V_{j}^{\theta^{c}}(s_{t}) - \gamma V_{j}^{\theta^{c}}(s_{t+1}) \right).
            \label{eq:shaping-reward-implementation}
            \end{equation}
            This coefficient $\alpha$ can be either manually set up as a fixed value, or implemented by a scheduler starting from $0$ to some preset upper limit. $R_{t}$ in $L_{\theta}(h_{t,i}, a_{t,i})$ and $L_{\theta^{c}}(h_{t,i})$ is replaced by the above $R_{t,i}$. To clarify this change, we the two new losses are expressed as: $\hat{L}_{\theta}(h_{t,i}, a_{t,i})$ and $\hat{L}_{\theta^{c}}(h_{t,i})$.
        
            Furthermore, it is needed to search the underlying factorization scheme of the transition function in Dec-POMDP, according to Theorem~\ref{thm:factored-dec-pomdp}. To fulfill this, we need to fulfill the following condition:
            \begin{equation}
            \label{eq:efficiency-eq}
                \sum_{i=1}^{|\tilde{N}|} V_{i}(s_{t}) = V(\tilde{N}, s_{t}).
            \end{equation}
            The above equality is implemented as a regularization term during training. In the main paper, we have defined a Bellman equation characterizing the global value $V(\tilde{N}, s_{t})$, such that
            \begin{equation*}
                V(\tilde{N}, s_{t}) = \mathbb{E}_{\pi} \left[ R_{t} + \gamma V(\tilde{N}, s_{t+1}) \mathlarger{\mathlarger{|}} s_{t} \right].
            \end{equation*}
            To maintain consistency with the critic losses, we consider $\lambda$-returns and the above equation can be extended as:
            \begin{equation*}
                V(\tilde{N}, s_{t}) = \mathbb{E}_{\pi} \left[ G_{t}^{\lambda} | s_{t} \right],
            \end{equation*}
            where each $G_{t:t+n}$ forming the $G_{t}^{\lambda}$ is expressed as follows:
            \begin{equation*}
                G_{t:t+n} = R_{t} + \gamma R_{t+1} + \cdots + \gamma^{n-1} R_{t+n} + \gamma^{n} V(\tilde{N}, s_{t+n}).
            \end{equation*}
            By incorporating the condition Eq.~\eqref{eq:efficiency-eq} into the above formula, we can obtain the regularization term referred to as the \textbf{efficiency loss}, as follows:
            \begin{equation*}
                L_{\theta^{c}}^{e}(h_{t,i}) = \frac{1}{2} \left( \hat{G}_{t}^{\lambda} - \sum_{i=1}^{|\tilde{N}|} V_{i}^{\theta^{c}}(s_{t}) \right)^{2},
            \end{equation*}
            where each $\hat{G}_{t:t+n}$ forming the $\hat{G}_{t}^{\lambda}$ is expressed as follows:
            \begin{equation*}
                \hat{G}_{t:t+n} = R_{t} + \gamma R_{t+1} + \cdots + \gamma^{n} R_{t+n} + \gamma^{n} \sum_{i=1}^{|\tilde{N}|} V_{i}^{\theta^{c}}(s_{t+n}).
            \end{equation*}
        
            In summary, the total loss function of Shapley Machine is as follows:
            \begin{equation*}
                L_{\texttt{SM}} = \frac{1}{T} \sum_{t=1}^{T} \left( \sum_{i \in \tilde{C}_{t}} \hat{L}_{\theta}(h_{t,i}, a_{t,i}) + \beta_{1} \sum_{i \in \tilde{N}} \hat{L}_{\theta^{c}}(h_{t,i}) + \beta_{2} L_{\theta^{c}}^{e}(h_{t,i}) \right),
            \end{equation*}
            where $T$ is the episode length; $\beta_{1}, \beta_{2} \in (0, 1)$ are two coefficients to control the importance of the two losses; as well as $\tilde{C}_{t}$ indicates the controlled agent set at timestep $t$ and $\tilde{N}$ indicates the ad hoc team following the convention in \cite{wang2025n}.

        \subsubsection{POAM}
            The implementation of POAM follows \cite{wang2025n}, which has been detailed in Appendix~\ref{subsec:poam}. In summary, the total loss function of POAM is as follows:
            \begin{equation*}
                L_{\texttt{POAM}} = \frac{1}{T} \sum_{t=1}^{T} \left( \sum_{i \in \tilde{C}_{t}} L_{\theta}(h_{t,i}, a_{t,i}) + \beta_{1} \sum_{i \in \tilde{N}} L_{\theta^{c}}(h_{t,i}) \right).
            \end{equation*}

        \subsection{Banzhaf Machine}
            The total loss function and the implementation of Banzhaf Machine are close to POAM, the only difference for which is that Banzhaf Machine uses TTD($\lambda$) instead of TD($\lambda$) that has been used in POAM.

    \subsection{Experimental Domains}
    \label{subsec:experimental-domains}
        We now briefly introduce the experimental domains for running experiments. If one would like to know more about details, please refer to \cite{wang2025n}.
        \subsubsection{MPE-PP} 
            The mpe-pp environment is a predator-prey task implemented within the Multi-Agent Particle Environment (MPE) framework. It simulates interactions within a two-dimensional space populated by two static obstacles, where three pursuer agents must cooperate to capture a single adversarial evader. A successful capture is defined as at least two pursuers simultaneously colliding with the evader, upon which the pursuers receive a positive reward of +1. If the capture is unsuccessful, no reward is granted. This environment is designed to test the ability of agents to coordinate under spatial and dynamic constraints.

        \subsubsection{SMAC}
            The StarCraft Multi-Agent Challenge (SMAC) serves as a benchmark suite for evaluating MARL algorithms in partially observable, cooperative settings. Built atop the StarCraft II game engine, SMAC presents a variety of micromanagement tasks where each agent (e.g., a Marine or Stalker) operates based on limited local observations and must coordinate actions with teammates to overcome enemy units. In this work, we focus on four specific SMAC scenarios:
            \textbf{5v6}: Five allied Marines versus six enemy Marines,
            \textbf{8v9}: Eight allied Marines versus nine enemy Marines,
            \textbf{10v11}: Ten allied Marines versus eleven enemy Marines,
            \textbf{3s5z}: Three allied Stalkers versus five enemy Zealots.
            At each timestep, agents receive a shaped reward proportional to the damage they inflict on opponents, along with bonus rewards of 10 points for each enemy defeated and 200 points for achieving overall victory by eliminating all adversaries. The total return is normalized so that the maximum achievable return in each scenario is 20. The action space in SMAC is discrete, enabling each agent to choose actions such as attacking a particular enemy, moving in a specific direction, or remaining idle. Notably, the variation in the number and type of agents and opponents across tasks results in scenario-specific observation and action space dimensionalities, thereby introducing further diversity and complexity for algorithmic evaluation. The length of an episode varies across different scenarios: MPE with $T=100$, 3sv5z with $T=250$, 5v6 with $T=70$, 8v9 with $T=120$ and 10v11 with $T=150$.
            
    \subsection{Experimental Settings}
    \label{subsec:experimental-settings}
        The training procedure of the n-agent ad hoc teamwork (NAHT) process is briefly introduced here. For more details, please refer to \cite{wang2025n}. For each scenario (e.g. MPE, 3sv5z, 5v6, 8v9 and 10v11), there are five groups of pretrained agents acting the uncontrolled agents such as IQL, MAPPO, VDN, QMIX and IPPO. For each episode evaluation, the number of uncontrolled agents $\tilde{u}$ is sampled, and then a group of the pretrained agents is sampled. Given that each task specifies a fixed total number of agents $\tilde{n}$, the number of controlled agents is computed as $\tilde{n} - \tilde{u}$. \textbf{Note that this is still a special case of openness, since for a varying number of controlled agents the number of uncontrolled agents is also varied in correspondence, though with an upper limit due to the task specifications.} The distributions for all sampling procedures are modeled as the multinational distribution with no replacement. Each uncontrolled agent conducts the greedy policy in both the training procedure and the testing procedure. In contrast, each controlled agent conducts the on-policy sampling by the parameterized policy (PPO in our algorithm) during the training procedure, while the greedy policy during the testing procedure.

    \subsection{Hyperparameter Settings}
    \label{subsec:hyperparameter-settings}
        Since our algorithm is established based on POAM, most hyperparameter settings follow that in \cite{wang2025n}. First, the actors and critics are implemented in recurrent neural networks, with full parameter sharing. Specifically, they are implemented by two fully connected layers followed by a GRU layer and an output layer. Each layer has 64 dimensions with a ReLU activation function, and layer normalization is applied. The encoder-decoder networks for inferring agent characteristics are also implemented in parameter sharing. The encoder is implemented by a GRU layer, followed by a fully connected layer with a ReLU activation function and an output layer. The decoder is implemented by two fully connected layers with ReLU activation functions, followed by an output layer. Adam Optimizer is used to train all models. The detailed hyperparameter for experiments is shown in Tables~\ref{tab:shapley-machine-hyperparams} and \ref{tab:common-hyperparams}. For the scenarios such as MPE, 3sv5z and 5v6, the values of $m$ are set as the number of non-empty coalitions. For the scenarios such as 8v9 and 10v11, the values of $m$ is simply set as the length of an episode (see Appendix~\ref{subsec:implementation-details} for more details). Note that the term $\sum_{j \neq i} \left( V_{j}^{\theta^{c}}(s_{t}) - \gamma V_{j}^{\theta^{c}}(s_{t+1}) \right)$ in shaped rewards has been standardised to match the scales of standardised rewards. To stabilize the learning processes for large-scale scenarios, the value loss clip have been added to Shapley Machine for 8v9 and 10v11.
        
        \begin{table}[ht]
        \caption{Key hyperparameter of Shapley Machine: $\lambda$ indicates the parameter for the weighting functions of TD($\lambda$), $m$ indicates the number of basis games, $\alpha$ controls the importance of the term $\sum_{j \neq i} \left( V_{j}^{\theta^{c}}(s_{t}) - \gamma V_{j}^{\theta^{c}}(s_{t+1}) \right)$ in Eq.~\eqref{eq:shaping-reward-implementation}, $\beta_{1}$ controls the importance of the critic loss $\sum_{i \in \tilde{N}} L_{\theta^{c}}(h_{t,i})$, and $\beta_{2}$ controls the importance of the efficiency loss $L_{\theta^{c}}^{e}(h_{t,i})$.}
        \label{tab:shapley-machine-hyperparams}
        \centering
        \begin{tabular}{cccccc}
            \hline\hline
            Task & $\lambda$ & $m$ & $\alpha$ & $\beta_{1}$ & $\beta_{2}$ \\
            \hline\hline
            MPE & $0.85$ & $7$ & $0.01$ & $0.5$ & $0.01$ \\
            3sv5z & $0.85$ & $7$ & $0.01$ & $0.5$ & $0.01$ \\
            5v6 & $0.85$ & $31$ & $0.002$ & $0.5$ & $0.001$ \\
            8v9 & $0.95$ & $120$ & $0.01$ & $0.5$ & $0.001$ \\
            10v11 & $0.95$ & $150$ & $0.01$ & $0.5$ & $0.001$ \\
            \hline\hline
        \end{tabular}
        \end{table}

        \begin{table}[ht]
        \caption{Common hyperparameter for RL settings.}
        \label{tab:common-hyperparams}
        \centering
        \begin{tabular}{cc}
            \hline\hline
            Hyperparameter & Value \\
            \hline\hline
            LR & $0.0005$ \\
            Epochs & $5$ \\
            Minibatches & $1$ \\
            Buffer size & $256$ \\
            Entropy coefficient& $0.05$ \\
            Clip & $0.2$ \\
            ED LR & $0.0005$ \\
            ED epochs & $1$ \\
            ED Minibatches & $1$ \\
            Optim\_alpha (Adam) & $0.99$ \\
            Optim\_eps (Adam) & $0.00001$ \\
            Use\_obs\_norm & True \\
            Use\_orthogonal\_init & True \\
            Use\_adv\_std & True \\
            Standardise\_rewards & True \\
            num\_parallel\_envs & 8 \\
            \hline\hline
        \end{tabular}
        \end{table}

    \subsection{Computational Resources}
        All experiments are conducted on Intel Xeon Gold 6230 CPUs and Nvidia V100-SXM2 GPUs. Each experiment run on MPE takes approximately 7 hours, utilizing 20 CPU cores and 1 GPU. Each experiment run on SMAC takes between 8 and 19 hours, utilizing 30 CPU cores and 1 GPU. All experiments are trained with 20M timesteps.
        
\section{Additional Experiments}
\label{sec:additional-experiments}
    \subsection{Analyzing Effects of Efficiency Axioms}
    \label{subsec:analyzing-effects-of-efficiency-axioms}
        We now demonstrate the association between the changing of the mean shaping rewards and the corresponding response of testing returns. As shown in Figures~\ref{fig:mpe-efficiency-demo}, the shaped rewards normally increases in the beginning and then gradually decreases to a plateau. The efficiency loss also exhibits the similar phenomenon. As shown in Figure~\ref{fig:5v6-efficiency-demo}, it can be observed that when the perturbation happens during training (as evidenced by test returns), the shaped rewards can consistently manifest this situation (as highlighted in red vertical dashed lines). This feature is crucial in NAHT, since it is a common case when an unseen agent appears in the environment, which may perturb the training stability. This justifies the practical effectiveness of the shaping rewards $R_{t,i}$ we propose to fulfill the Efficiency axiom.
        
        \begin{figure}[ht!]
            \centering
            \includegraphics[width=\linewidth]{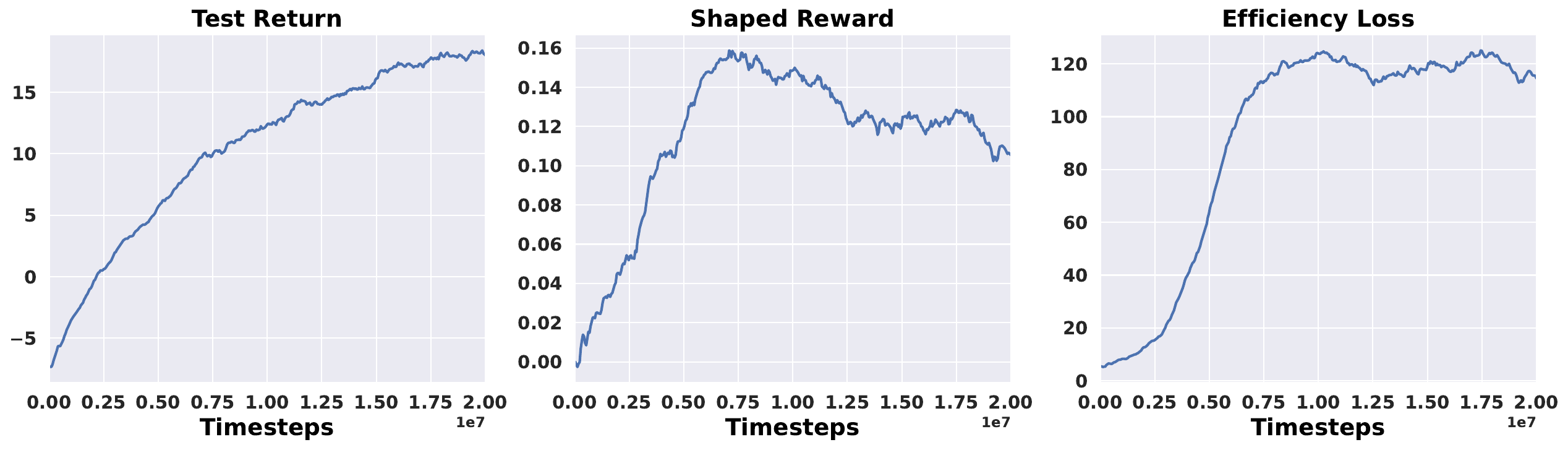}
            \caption{One run of MPE to show the variation of test return, shaped reward and efficiency loss.}
        \label{fig:mpe-efficiency-demo}
        \end{figure}
    
        \begin{figure}[ht!]
            \centering
            \includegraphics[width=\linewidth]{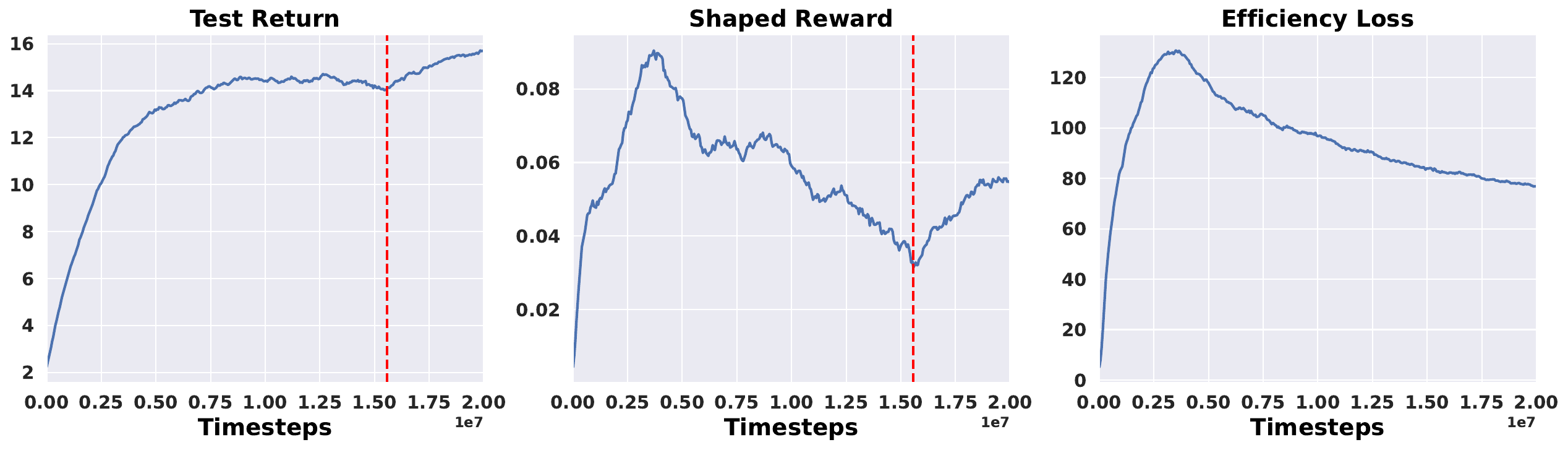}
            \caption{One run of 5v6 to show the variation of test return, shaped reward and efficiency loss.}
        \label{fig:5v6-efficiency-demo}
        \end{figure}

    \subsection{Potential Reasons for Instability for Large-Scale Scenarios}
    \label{subsec:potential-reasons-for-instability-for-large-scale-scenarios}
    % We intend to analyze the potential reason why the results of large-scale scenarios presented in the main of paper are unstable.
        During our trials, we observed that the performance of Shapley Machine was unstable in large-scale scenarios, such as 8v9 and 10v11. We now analyze the potential reason behind this phenomenon. As observed from Figures~\ref{fig:8v9-evidence-of-value-instability} and \ref{fig:10v11-evidence-of-value-instability}, it can be justified that the instability of shaped rewards is the key reason to cause the unstable learning progresses for large-scale scenarios. As discussed in Appendix~\ref{subsec:analyzing-effects-of-efficiency-axioms}, we hypothesize that the instability is caused by the the shaping reward terms constituted of instable changing of values (critics), such as $\sum_{j \neq i} \left( V_{j}^{\theta^{c}}(s_{t}) - \gamma V_{j}^{\theta^{c}}(s_{t+1}) \right)$. To mitigate this issue, we propose to add the value loss clip to stabilize the learning progresses of value functions for the scenarios such as 8v9 and 10v11.
        
        \begin{figure}
            \centering
            \includegraphics[width=0.65\linewidth]{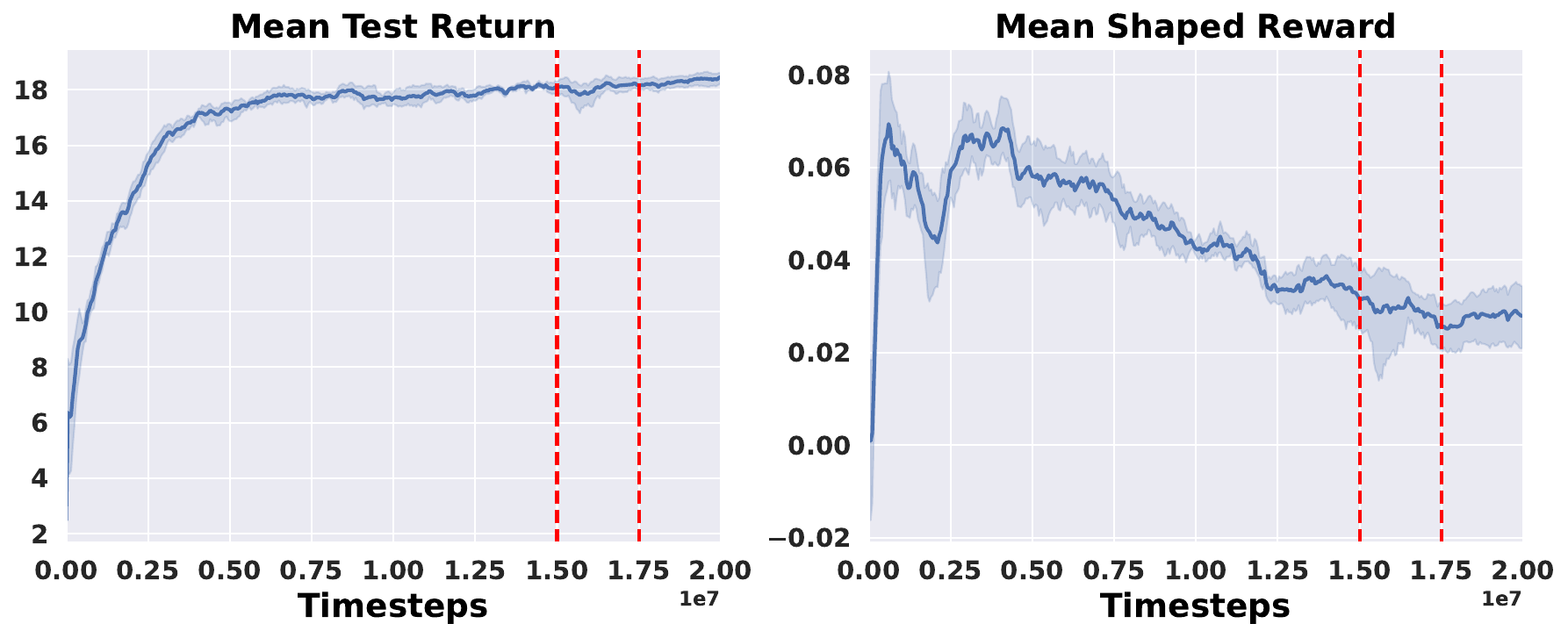}
            \caption{A case of 8v9 to show the instability of learning progress via the mean shaped rewards. The range bounded by two red dashed lines shows the fluctuation of the learning process.}
        \label{fig:8v9-evidence-of-value-instability}
        \end{figure}
        
        \begin{figure}
            \centering
            \includegraphics[width=0.65\linewidth]{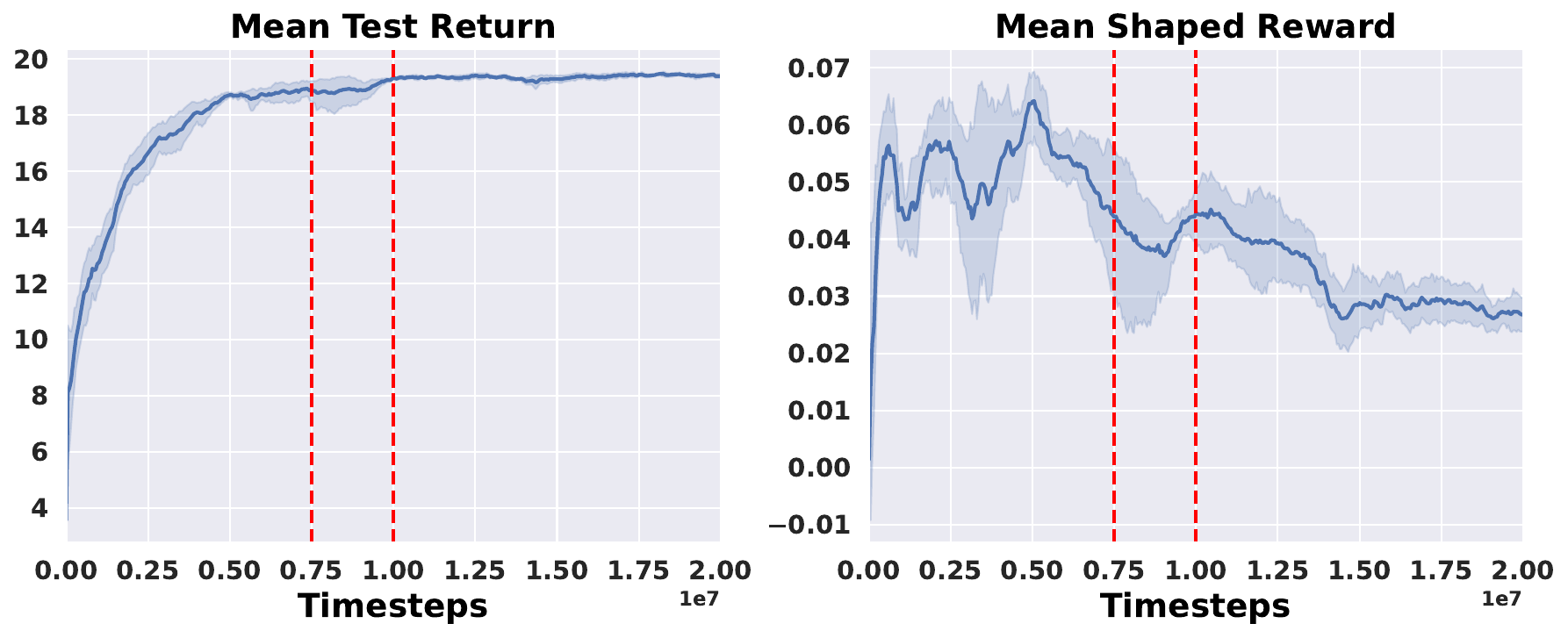}
            \caption{A case of 10v11 to show the instability of learning progress via the mean shaped rewards. The range bounded by two red dashed lines shows the fluctuation of the learning process.}
        \label{fig:10v11-evidence-of-value-instability}
        \end{figure}
        
        As shown in Figure~\ref{fig:result-value-loss-clip}, the learning processes with the value loss clip apparently perform more stable than those without. This verifies our initial hypothesis. Due to this strategy will slow down and even hinder the performance for the other three scenarios, we have not posted this as a common strategy. \textbf{We believe this warrants further investigation in future work before any claims can be made about its general performance.} 
        
        \begin{figure}
            \centering
            \includegraphics[width=0.65\linewidth]{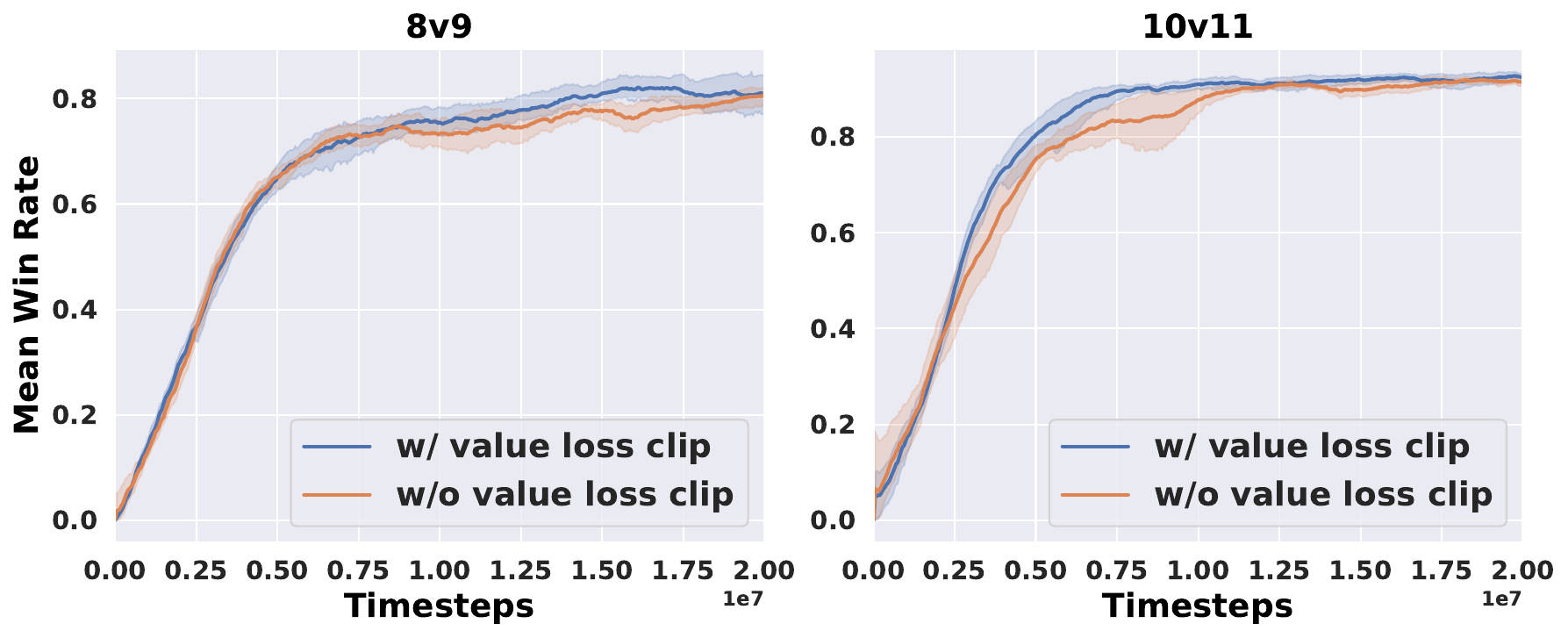}
            \caption{Comparison between stability of learning processes with and without the value loss clip in the 8v9 and 10v11 scenarios.}
        \label{fig:result-value-loss-clip}
        \end{figure}
        
    \subsection{Further Evidence on the Benefit of Structured Algorithm}
    \label{subsec:further-evidence-on-the-benefit-of-structured-algorithm}
        In addition to the presentation of critic losses in the main of paper, we also present the entropy of policies in Figure~\ref{fig:entropy}. Consistent with the critic losses, the entropy of Shapley Machine decreases faster than POAM, which is a strong evidence to show the power of structured value functions in Shapley values as each agent's credit during training.
        
        \begin{figure}[ht!]
            \centering
            \includegraphics[width=\linewidth]{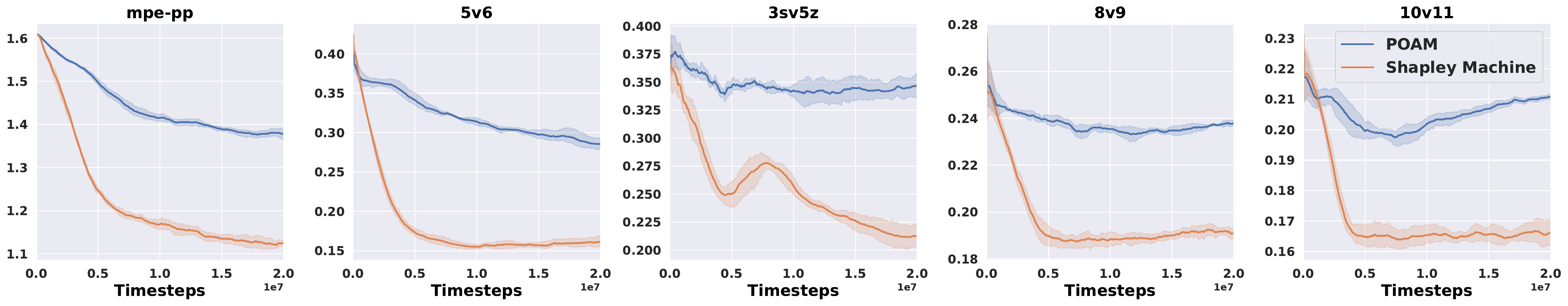}
            \caption{Entropy of policies across all scenarios during the training procedure.}
        \label{fig:entropy}
        \end{figure}
        
    \subsection{Relationship between the Number and the Values of Basis Games}
    \label{subsec:relationship-between-the-number-and-the-values-of-basis-games}
        The values of basis games are set up by the probability values of a geometric distribution governed by $\lambda$ in this paper, respecting the convention of RL. As a result, the $\lambda$ value will influence the shape of the geometric distribution. When the $\lambda$ value grows larger, the geometric distribution tends to be a longer tail with the probability decays slowly as $m$ increases, and vice versa. In other words, \textbf{controlling the number of basis games can be implemented by simply changing the values of basis games, or equivalently, the $\lambda$ value, under the assumption of the geometric distribution}. This is consistent with the insight from RL. To verify this result still holding in NAHT, we conduct a case study, as shown in Figure~\ref{fig:control-number-of-basis-equal-to-change-values}. It can be observed that the number of active basis games for $m = 31$ and $m = 70$ is the same under $\lambda = 0.85$, resulting in nearly identical learning performance. \textbf{This aligns with the conventional view of $\lambda$'s impact on TD($\lambda$) in reinforcement learning, primarily as a mechanism for controlling the tail length of the return.}
        
        \begin{figure}[htbp]
          \centering
          % First row of subfigures
          \begin{subfigure}[b]{0.26\textwidth}
            \centering
            \includegraphics[width=\linewidth]{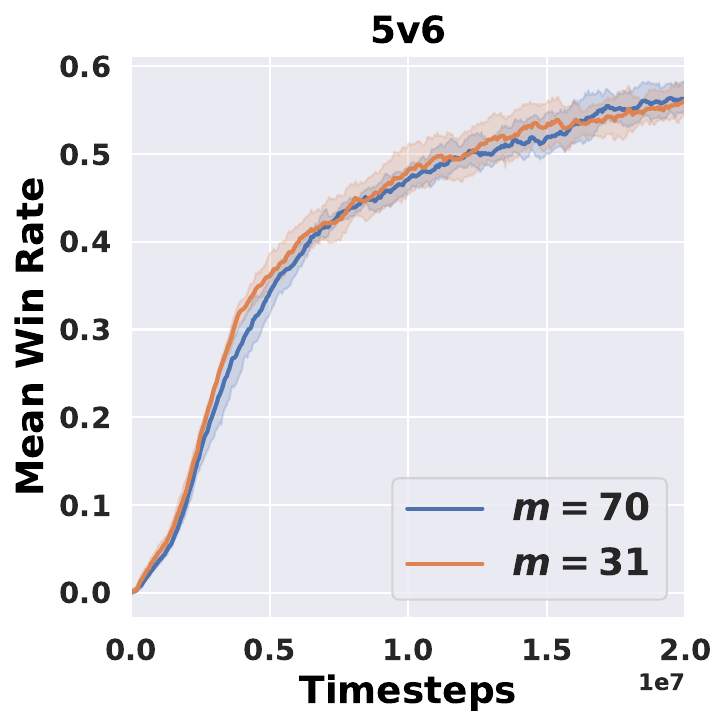}
            \caption{Shapley Machine: $m=31$ vs. $m=70$ for $\lambda=0.85$.}
            % \label{fig:}
          \end{subfigure}
          ~
          \begin{subfigure}[b]{0.64\textwidth}
            \centering
            \includegraphics[width=\linewidth]{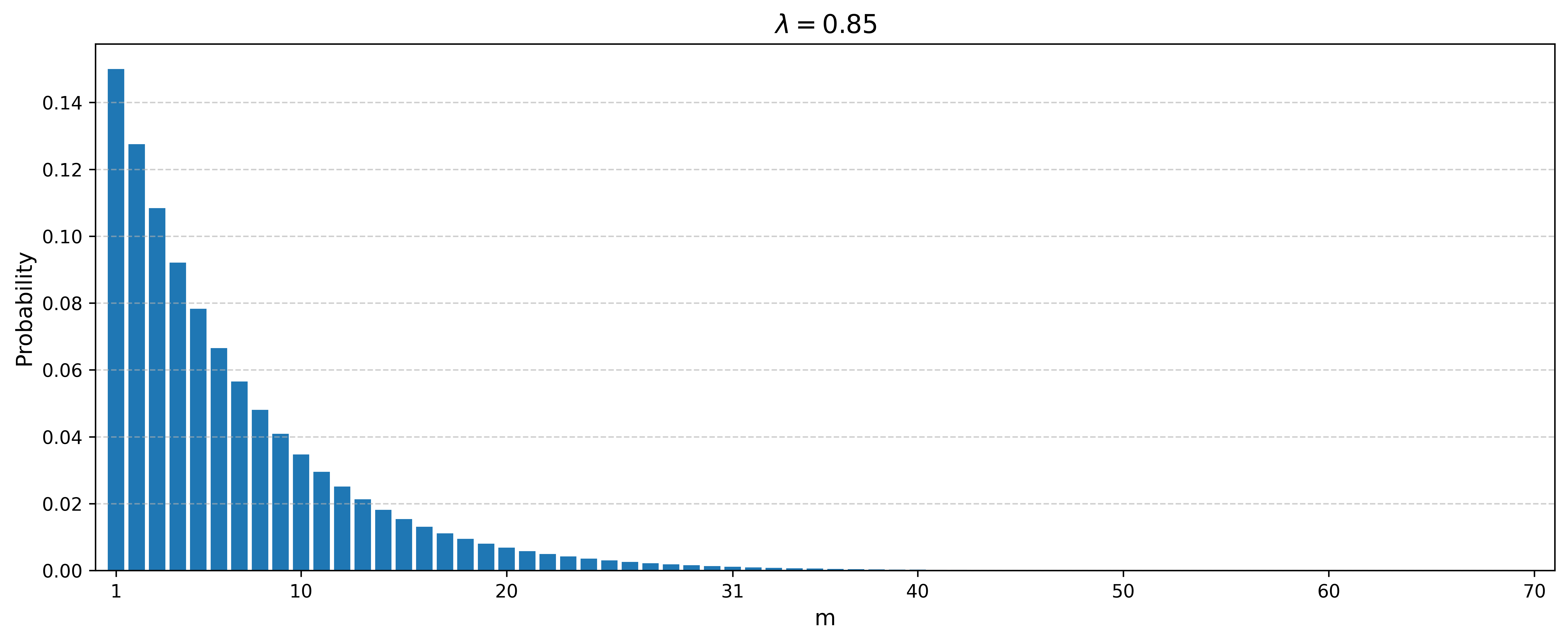}
            \caption{Demonstration of the shape the basis game values with $m=7$ and $\lambda=0.85$.}
            % \label{fig:}
          \end{subfigure}
        \caption{Controlling the number of active basis games can be implemented by changing the parameters to control the shape of basis game values. Note that the probability values for $m > 31$ are nearly zero.}
        \label{fig:control-number-of-basis-equal-to-change-values}
        \end{figure}
        
        In addition, we demonstrate that even with the same number of basis games, their specific values significantly influence the learning process. As shown in Figure~\ref{fig:different-values-same-number-basis-games}, given the same number of active basis games as $m=7$, the learning processes with $\lambda=0.5$ and $\lambda=0.85$ result in diverse performance. \textbf{This provides an alternative insight into how $\lambda$ influences TD($\lambda$), diverging from the traditional reinforcement learning perspective.}
        
        \begin{figure}[htbp]
          \centering
          % First row of subfigures
          \begin{subfigure}[b]{0.26\textwidth}
            \centering
            \includegraphics[width=\linewidth]{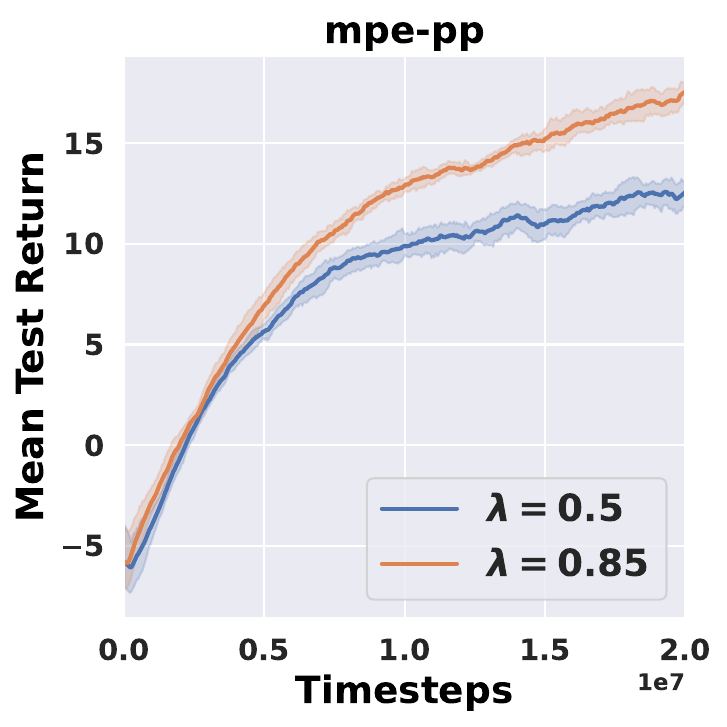}
            \caption{Shapley Machine: $\lambda=0.5$ vs. $\lambda=0.85$ for $m=7$.}
            \label{fig:shapley-machine-m-mpe}
          \end{subfigure}
          ~
          \begin{subfigure}[b]{0.64\textwidth}
            \centering
            \includegraphics[width=\linewidth]{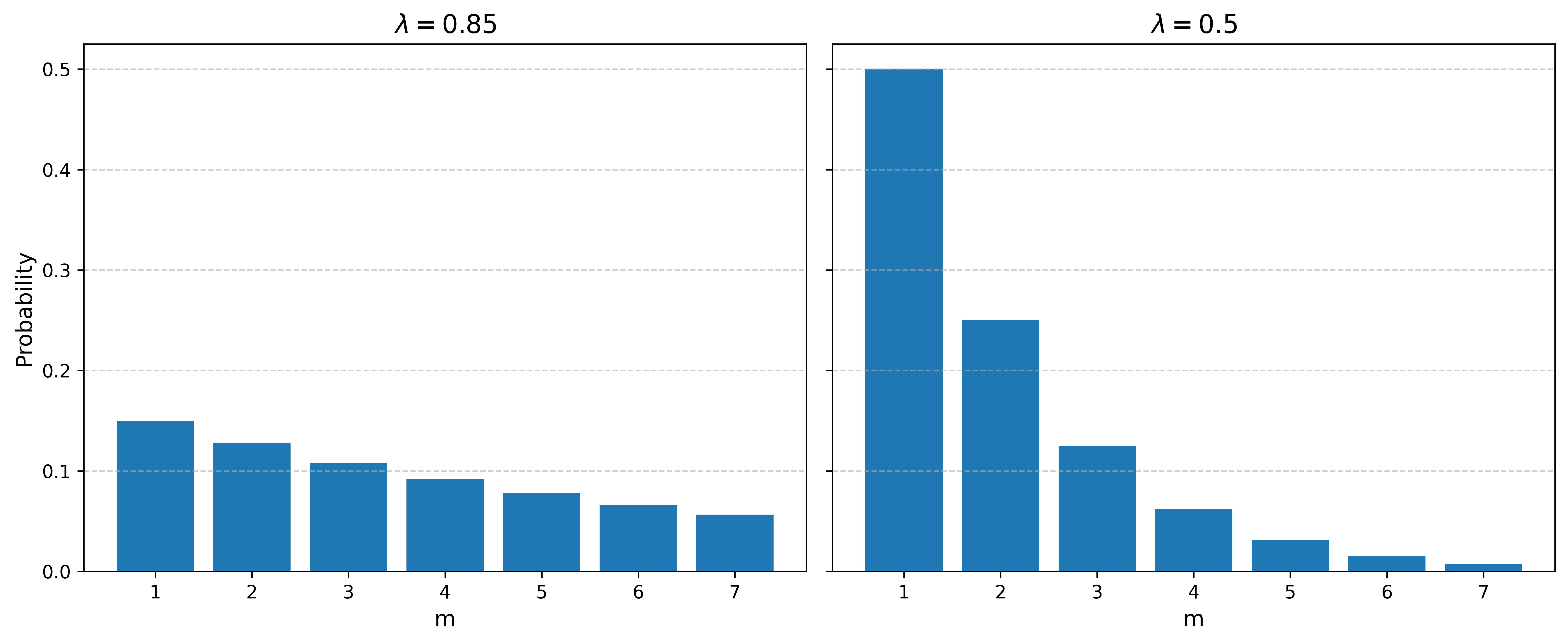}
            \caption{Demonstration of the shapes of the basis game values with $m=7$, for $\lambda=0.5$ and $\lambda=0.85$, respectively.}
            \label{fig:shapley-machine-m-demo-mpe}
          \end{subfigure}
        \caption{The influence of different values of basis games given the same number of active basis games. Note that the shapes of the geometric distributions vary significantly with different values of $\lambda$, under the same number of active $m$.}
        \label{fig:different-values-same-number-basis-games}
        \end{figure}
        
    \subsection{Empirical $m$ values and Number of Agents}
    \label{subsec:empirical-m-values-and-number-of-agents}
        % To ensure the results are more convincing and less affected by instability, we conduct the experiment using the Shapley Machine with the value loss clip, as evidenced by Appendix~\ref{subsec:potential-reasons-for-instability-for-large-scale-scenarios}. 
        We now investigate the relationship between the empirical $m$ values and the number of agents, to tackle some scenarios where the length of an episode $T$ is far smaller than the theoretical value of $m$, especially when the number of agents is large. We have demonstrated that for the case with $5$ agents the empirical effective value of $m$ is $20$. To prove the claim that the empirical $m$ values are still proportional to the number of agents, it is sufficient to show that $m=20$ cannot work as well as the $m$ set by larger values for the scenarios with more agents, such as the 8v9 and 10v11 scenarios. As shown in Figure~\ref{fig:evidence-of-num-agent-and-num-basis}, both the 8v9 and 10v11 scenarios demonstrate the better performance on the larger $m$ values, set by the length of an episode. Therefore, we can conclude a law of empirical $m$ values with respect to the number of agents, as shown in Figure~\ref{fig:empirial-m-law}. \textbf{Empirically, $m$ tends to grow as the number of agents increases.}
        
        \begin{figure}[htbp]
            \centering
            \includegraphics[width=0.65\linewidth]{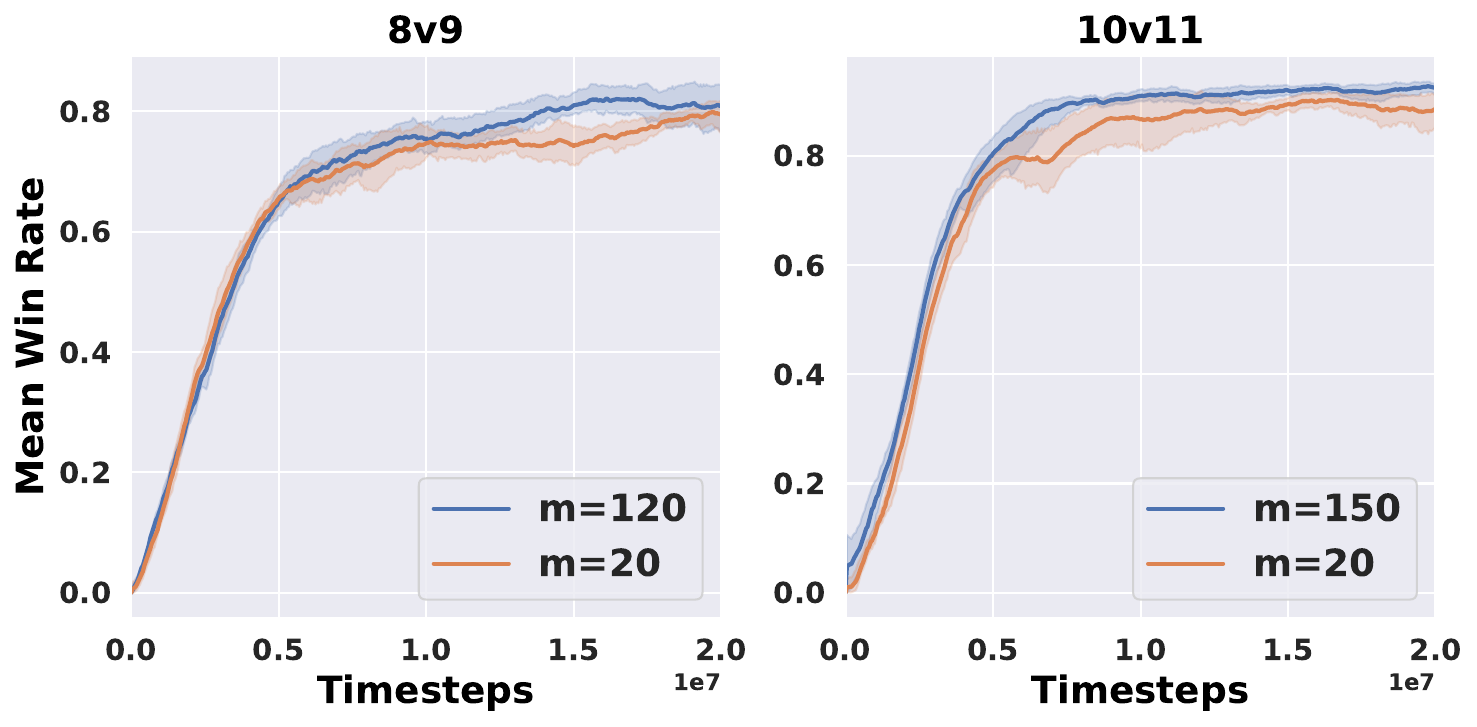}
            \caption{Comparison between the mean win rates of $m=20$ and $m=120$ in the 8v9 and 10v11 scenarios.}
        \label{fig:evidence-of-num-agent-and-num-basis}
        \end{figure}
        
        \begin{figure}[htbp]
            \centering
            \includegraphics[width=0.6\linewidth]{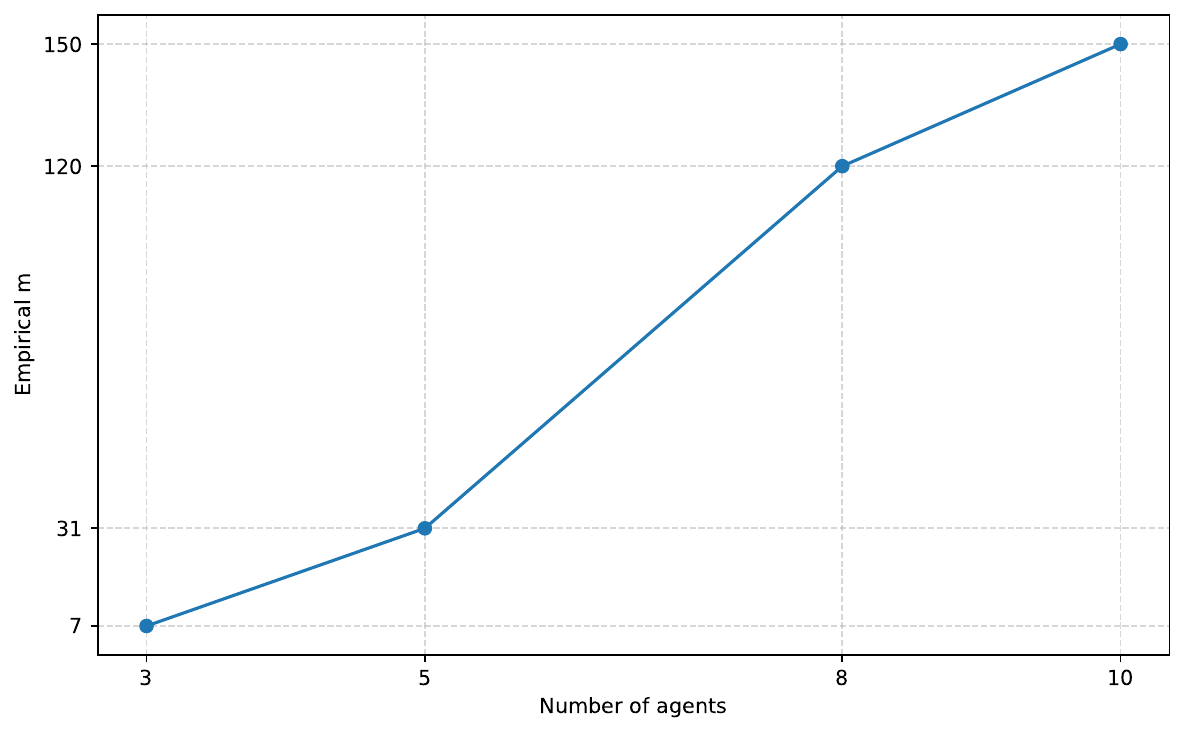}
            \caption{The law of empirical $m$ values with respect to the number of agents.}
        \label{fig:empirial-m-law}
        \end{figure}
        
    \subsection{Ablation Studies}
    \label{subsec:ablation-studies}
        We conducted ablation studies across all experimental domains to assess the effectiveness of our proposed algorithm, Shapley Machine. As shown in Figure~\ref{fig:ablation}, all the shaped reward terms, efficiency loss components and truncated TD($\lambda$) inspired by the cooperative games contribute meaningfully to performance in most scenarios.

        Specifically, in the MPE, 5v6 and 3sv5z scenarios, the superior performance of the Shapley Machine variant with $\alpha=0$ and $\beta_2=0$ over POAM highlights the effectiveness of using truncated horizons in TD($\lambda$). In contrast, in the 8v9 and 10v11 scenarios, the setting of POAM is equivalent to this Shapley Machine variant, except that a value loss clip is applied in the latter. Therefore, the lack of superior performance by the Shapley Machine in these cases suggests that the value loss clip is not the primary factor driving its performance gains. Instead, it only works as a strategy to stabilize the performance of Shapley Machine for large-scale scenarios, as shown in the previous experimental results. 
        
        Another noteworthy observation is that variants using only the efficiency loss still demonstrate competitive performance. This suggests that, in certain settings, the efficiency loss can approximate the Efficiency axiom, even if it does not formally guarantee that the resulting reward function satisfies it. Nonetheless, results from all scenarios except 8v9 highlight that fulfilling the Efficiency axiom on the reward function is critical for achieving effective performance. This indicates that earlier Shapley value-based MARL methods~\cite{wang2020shapley,wang2022shaq}, which did not enforce the Efficiency axiom on the reward, may have been less effective in scenarios where this axiom plays a pivotal role.
        
        \begin{figure}
            \centering
            \includegraphics[width=0.19\linewidth]{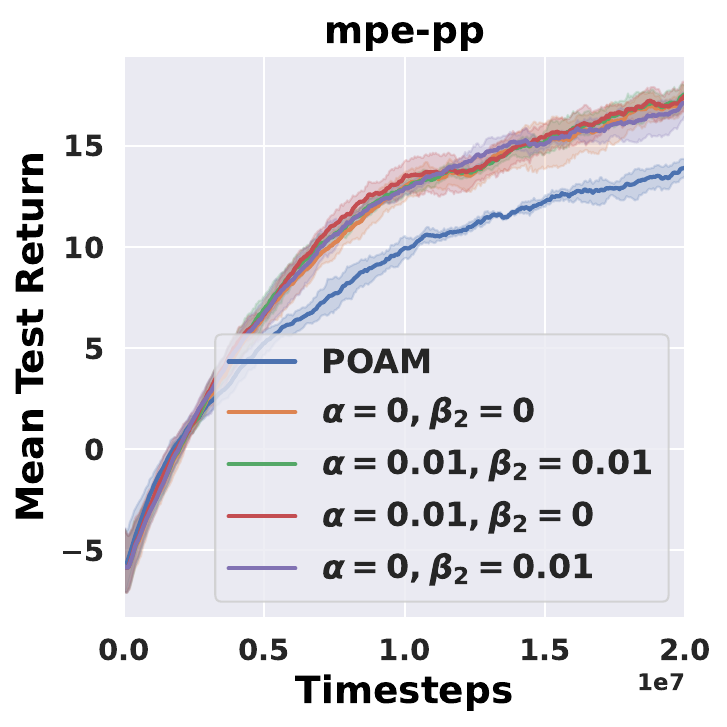}
            \includegraphics[width=0.19\linewidth]{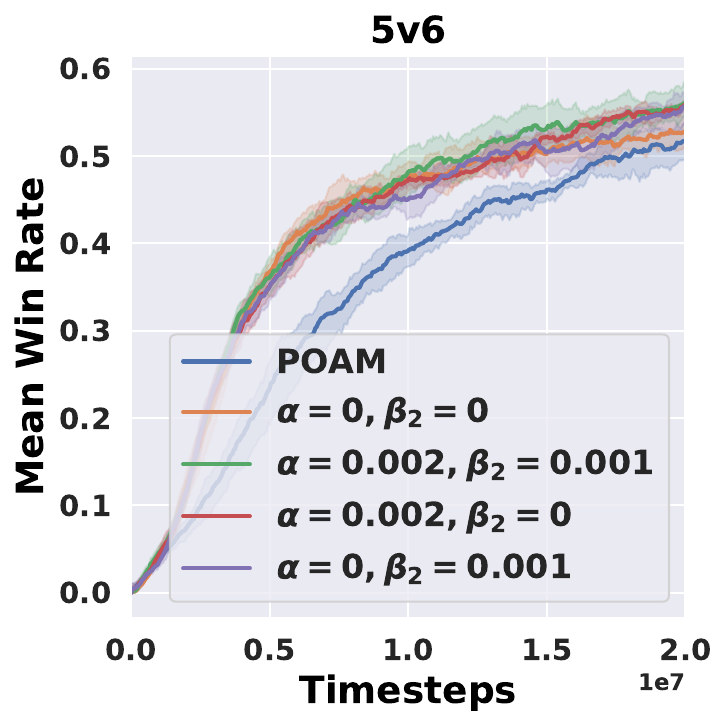}
            \includegraphics[width=0.19\linewidth]{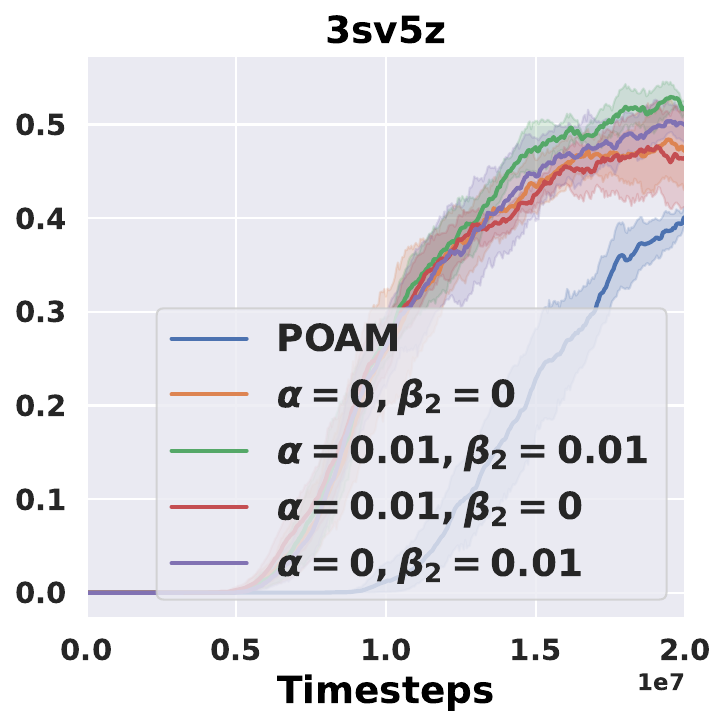}
            \includegraphics[width=0.19\linewidth]{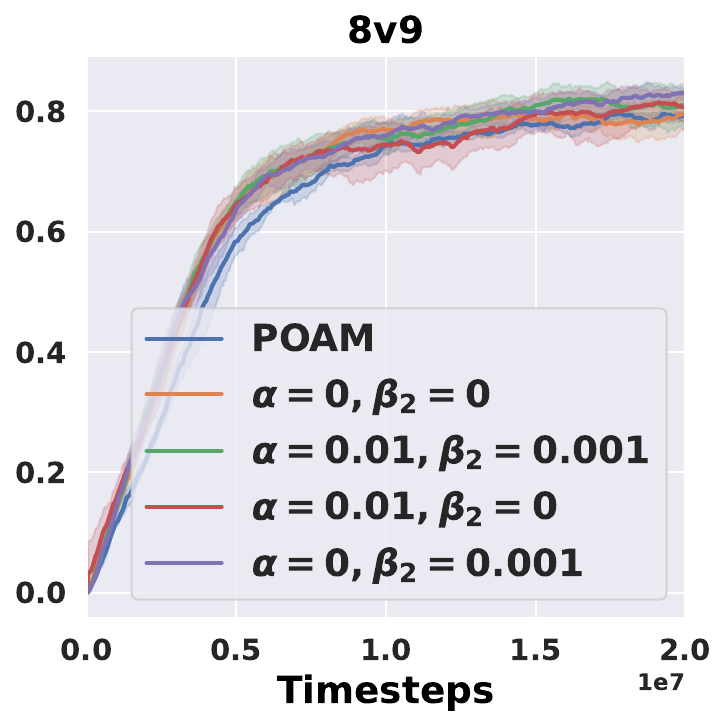}
            \includegraphics[width=0.19\linewidth]{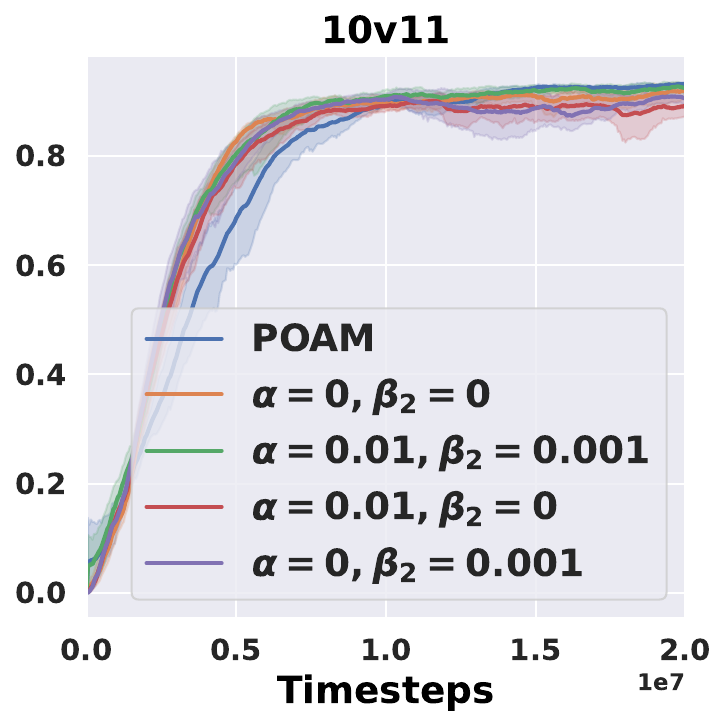}
            \caption{Ablation studies of Shapley Machine on all experimental domains. The Shapley Machine ablation variants are labelled as the $\alpha$ and $\beta_2$ settings.}
        \label{fig:ablation}
        \end{figure}

\section{Broader Impacts}
\label{sec:broader-impacts}
    The paper primarily emphasizes the positive societal impacts of its contributions, particularly in advancing multi-agent learning algorithms through principled and interpretable design. By introducing an axiomatic framework and proposing the Shapley Machine algorithm, the work has the potential to promote the development of more trustworthy and transparent AI systems, reduce redundancy in algorithm design, and enhance understanding across the field. These outcomes could ultimately support broader societal goals involving reliable AI deployment in complex cooperative environments. While the paper, being theoretical in nature, does not explicitly explore negative societal impacts, it implicitly raises concerns about potential risks, especially when such multi-agent systems are applied in sensitive domains such as military operations.

\end{document}